\documentclass[sn-basic]{sn-jnl}

\usepackage{graphicx}%
\usepackage{multirow}%
\usepackage{amsmath,amssymb,amsfonts}%
\usepackage{amsthm}%
\usepackage{mathrsfs}%
\usepackage[title]{appendix}%
\usepackage{xcolor}%
\usepackage{textcomp}%
\usepackage{booktabs}%
\usepackage{algorithm}%
\usepackage{algorithmic}%
\usepackage{bm}
\renewcommand{\vec}[1]{{\bm{#1}}}
\newtheorem{proposition}{Proposition}
\newtheorem{remark}{Remark}

\begin{document}

\title[Co-clustering by Tri-Factorizing the Regression Coefficient Matrix]{Co-clustering of Response and Covariate Variables by Tri-Factorizing Their Non-negative Regression Coefficient Matrix}

\author*[1]{\fnm{Kenichi} \sur{Satoh} (\href{https://orcid.org/0000-0003-4436-9347}{ORCID: 0000-0003-4436-9347})}
\email{kenichi-satoh@biwako.shiga-u.ac.jp}
\author[2]{\fnm{Yomei} \sur{Tokuda} (\href{https://orcid.org/0000-0002-6943-8794}{ORCID: 0000-0002-6943-8794})}
\email{tokuda@edu.shiga-u.ac.jp}
\affil[1]{\orgdiv{Faculty of Data Science}, \orgname{Shiga University}, \orgaddress{\street{Banba 1-1-1}, \city{Hikone}, \postcode{522-8522}, \state{Shiga}, \country{Japan}}}
\affil[2]{\orgdiv{Faculty of Education}, \orgname{Shiga University}, \orgaddress{\street{Hiratsu 2-5-1}, \city{Otsu}, \postcode{520-0862}, \state{Shiga}, \country{Japan}}}

\abstract{Two-block data---two sets of variables measured on the same individuals, such as microbial taxa and metabolites---raise the question of how \emph{groups} of covariate variables relate to \emph{groups} of response variables. Co-clustering answers this for a single matrix, not for two variable blocks; existing two-block methods either cluster only one side or return signed factors rather than clusters. Starting from the multivariate linear regression $Y_1\approx M Y_2$, we give its non-negative coefficient matrix a tri-factorization $M=X_1\Theta X_2$ (a tri-NMF), so that $X_1$ softly clusters the response variables, $X_2$ the covariate variables, and $\Theta$ is a tested matrix of block correspondences. This makes the method the non-negative member of the reduced-rank regression (RRR) family, expressing RRR's low-rank class in a parts-based basis as NMF relates to PCA; the constraint can only restrict the fit, so predictive accuracy is not the aim; the co-clustering and tested correspondences are. We give multiplicative update rules, choose the two ranks by cross-validation, and develop a conditional Wald test for $\Theta$ applied after basis selection; its size is nominal with fixed bases, conservative after re-estimation, and slightly above nominal under correlated responses, while a non-zero path's \emph{magnitude} stays conditional on the estimated bases. We illustrate the method---a tri-factorized non-negative RRR (NMF-RRR)---on four data sets spanning a permutation structure (Doubs, community ecology), a weak cross-structure under $p>n$ (nutrimouse, nutrigenomics), a pronounced one in a screened microbiome--metabolome study (FRANZOSA, where two microbial groups are jointly associated with each metabolite module), and a classification special case (Wine).}

\keywords{Co-clustering of response and covariate variables, Non-negative matrix tri-factorization (tri-NMF), Multivariate linear regression, Regression coefficient matrix, Block-correspondence inference, Reduced-rank regression}

\maketitle

\section{Introduction}\label{sec1}

Non-negative matrix factorization (NMF) of \citet{lee1999,lee2000} approximates a non-negative data matrix by the product of a basis matrix and a coefficient matrix. Because the coefficients are non-negative and normalize to proportions, they admit a direct \emph{soft-clustering} reading \citep{ding2005}, which has made NMF a standard tool for parts-based representation and exploratory analysis. Inserting a third, middle factor yields the \textbf{non-negative matrix tri-factorization (tri-NMF)} of \citet{ding2006}, $V \approx F S G^{\top}$, which clusters the rows and the columns of a single matrix simultaneously---the canonical algebraic form of \textbf{co-clustering}.

Co-clustering has been developed along several lines, all operating on \emph{one} observed matrix and clustering its two index sets: bipartite spectral co-clustering \citep{dhillon2001}, information-theoretic co-clustering \citep{dhillon2003itcc}, latent block models \citep{govaert2013}, and biclustering of gene-expression data \citep{cheng2000}. In each case the two modes are simply the rows and columns of the same matrix (e.g., documents $\times$ words, or genes $\times$ samples).

In many studies, however, the data arrive as \textbf{two blocks of variables measured on the same individuals}---a block of \emph{covariates} (inputs) and a block of \emph{responses} (outputs)---and the scientific question is how \emph{groups of covariates} relate to \emph{groups of responses}. Hepatic gene expression versus fatty-acid concentrations in nutrigenomics, environmental gradients versus species abundances in community ecology, gut microbial taxa versus metabolite concentrations in microbiome studies, and chemical composition versus optical properties in materials science \citep{tokuda2020glass,tokuda2021glass} are all of this kind. One then wishes to co-cluster the covariate variables and the response variables \textbf{jointly}, while respecting the regression of the responses on the covariates.

Existing methods address parts of this problem but not the whole.
\begin{itemize}
\item \textbf{(i) Single-matrix co-clustering} (above) clusters the two index sets of \emph{one} matrix; it has no notion of a covariate--response relationship, uses no covariates, and cannot predict a new individual.
\item \textbf{(ii) Supervised one-sided feature clustering}---the information bottleneck \citep{tishby1999} and information-theoretic word clustering for text classification \citep{dhillon2003feature}---clusters the \emph{covariate} features so as to preserve information about the response, but clusters only one side.
\item \textbf{(iii) Two-block latent-variable methods} relate covariates and responses through shared directions: canonical correlation analysis \citep[CCA;][]{hotelling1936}, partial least squares \citep[PLS;][]{wold1975}, and reduced-rank regression \citep[RRR;][]{anderson1951,izenman1975,reinsel1998}. These return \emph{signed} factors or loadings rather than clusters, so a downstream clustering step is required to obtain groups; and canonical correlation analysis and unregularized reduced-rank regression, in particular, become ill-posed when the number of variables approaches or exceeds the sample size.
\item \textbf{(iv) NMF with covariates} \citep{satoh2023,satoh2026gcm,satoh2025nmfkc} models the NMF coefficient matrix by a \emph{known} covariate matrix, $Y \approx X \Theta A$, recovering exactly the mean structure of the growth curve model \citep[GCM;][]{potthoff1964} and thereby enabling prediction---but the covariate side is \emph{given}, not clustered.
\item \textbf{(v) Other NMF extensions}---deep NMF and deep matrix factorization \citep{trigeorgis2017}, graph-regularized NMF \citep{cai2011}, and kernel NMF \citep{zhang2006knmf}---enrich the factorization with depth, an auxiliary similarity graph, or a nonlinear feature map; but they are formulated as the decomposition or self-reconstruction of a \emph{single} non-negative matrix, not with a distinct covariate block, and none co-clusters a covariate and a response variable set jointly.
\end{itemize}

Closest in spirit are three lines of work. First, \emph{supervised} and \emph{discriminative} variants of NMF \citep{zafeiriou2006,leeh2010} inject label or discriminant information into the factorization of a \emph{single} data matrix. Second, several methods impose structure on \emph{both} sides of the low-rank coefficient $M$ of a two-block regression: sparse and co-sparse reduced-rank and factor regression \citep{chen2012srrr,bunea2012,chenhuang2012,mishra2017secure}, and the sparse orthogonal factor regression of \citet{uematsu2019sofar}, which reads $M$ as a response--covariate ``association network''. Third, and closest of all, \citet{yu2019bicluster} bi-cluster the regression coefficient matrix itself, grouping both the responses and the predictors. These last two lines come nearest to a joint co-clustering of the two variable sets, but the structure they impose is \emph{signed} sparsity/selection (CCA/PLS/RRR and their sparse variants) or a \emph{hard} fusion clustering \citep{yu2019bicluster}---not a non-negative, normalized soft membership, and none equips the between-group links with significance tests. The nearest \emph{non-negative} neighbour is the non-negative (Poisson) reduced-rank regression of \citet{fitzgerald2022} (nn-PRRR), a rank-$r$ \emph{bi-factorization} $M=UV^{\top}$ that ties both blocks to the \emph{same} $r$ components, clustering both sides only jointly---with no separate numbers of response and covariate clusters and no block-correspondence matrix to test. We instead \emph{tri-factorize} the coefficient, $M=X_1\Theta X_2$ with $X_1,X_2$ normalized non-negative---sharing the three-factor algebra of \citet{ding2006}, but replacing their orthogonality constraint by an $\ell_1$ normalization, so that the factors become soft cluster profiles rather than the hard indicators that exact non-negative orthogonality entails---giving $Q$ response clusters, $R$ covariate clusters, and a $Q\times R$ correspondence $\Theta$ whose entries we estimate \emph{and test}: an inferable co-clustering of the response and covariate variables that neither nn-PRRR nor the structured-coefficient methods above provide. Table~\ref{tab:positioning} (Section~\ref{sec2}) places these methods side by side on the axes that distinguish them. The method is thus best read as a \emph{tri-factorized non-negative reduced-rank regression}, the natural next step in the sequence RRR $\to$ nn-PRRR $\to$ NMF-RRR.

We fill this gap by starting from the multivariate linear regression $Y_1\approx M Y_2$ of the responses on the covariates and giving its coefficient matrix the non-negative, three-factor form
\begin{equation}
Y_1 \;\approx\; X_1\,\Theta\,X_2\,Y_2,
\label{eq:model}
\end{equation}
where the responses $Y_1$ and the covariates $Y_2$ are known, $X_1$ (response basis) and $X_2$ (covariate basis) are non-negative, with each column of $X_1$ and each row of $X_2$ summing to one, and $\Theta$ links them. The induced regression coefficient matrix $M=X_1\Theta X_2$ is then a non-negative tri-factorization, so $X_1$ co-clusters the \textbf{response variables}, $X_2$ co-clusters the \textbf{covariate variables}, and $\Theta$ gives the \textbf{block correspondence}. We thus read the method as a co-clustering of the response and covariate variables obtained by tri-factorizing the non-negative regression coefficient matrix; because, algebraically, $Y_1\approx MY_2$ is a non-negative low-rank regression of $Y_1$ on $Y_2$, its estimator is a \emph{tri-factorized non-negative reduced-rank regression} (NMF-RRR), the name we use for it below. The co-clustering is \emph{supervised} in the weak sense that the bases are fitted by regressing $Y_1$ on $Y_2$---in contrast to clustering a single matrix---not in the sense of optimizing predictive accuracy. It is essential to note that it is the \emph{coefficient} matrix $M$ that is tri-factorized, not the data matrix $Y_1$, and that the clusters are of the covariate and response \emph{variables}, not of the samples---the samples serve only as the bridge over which $M$ is estimated.

\emph{Why this model, and what can only it do?} Among methods that structure the regression coefficient $M$, the non-negative \emph{tri}-factorization is the one that simultaneously delivers all three of: (i) \emph{soft} clusterings of \emph{both} the response and the covariate variables---non-negative cluster profiles (soft memberships read from relative loadings) rather than signed loadings or sparse selection; (ii) \emph{separate} group counts $Q\neq R$, so that one response group may be driven by several covariate groups (cross-structure); and (iii) an estimable, significance-\emph{tested} correspondence $\Theta$ between the two groupings. Non-negativity by itself does not suffice---a bi-factorization $M=UV^{\top}$ (such as nn-PRRR) forces $Q=R$ with no $\Theta$ to test; it is the middle factor $\Theta$ that makes the three-factor form the \emph{minimal} structure achieving a tested, two-sided soft co-clustering (Remark~\ref{rem:reparam}; Table~\ref{tab:positioning}).

The model has two complementary readings. The \textbf{primary} one, which we pursue, is co-clustering: $M=X_1\Theta X_2$ is a non-negative tri-factorization---a soft, non-orthogonal counterpart of the tri-NMF of \citet{ding2006}, applied to the regression coefficient matrix---so $X_1$ and $X_2$ softly cluster the response and covariate variables and $\Theta$ records their $Q\times R$ block correspondence, whose entries we estimate and test (a co-clustering augmented by a group-level network). The \textbf{secondary} reading explains the algebra: dropping non-negativity, $Y_1\approx MY_2$ with $\mathrm{rank}(M)\le\min(Q,R)$ is a reduced-rank regression---specializing, under whitening, to CCA and PLS, and to principal component analysis when $Y_1=Y_2$ \citep{eckart1936}---so the estimator is a \emph{tri-factorized non-negative RRR}: it shares RRR's low-rank class (Proposition~\ref{prop:rrr}) but expresses it in a non-negative, parts-based basis, exactly as NMF relates to PCA (their leading fitted directions nearly coincide when the structure is strong, while secondary directions and the basis differ; Section~\ref{sec5}).

This paper makes the following contributions.
\begin{enumerate}
\item We \textbf{formalize NMF with two variable blocks as a co-clustering of the response and covariate variables}, obtained by tri-factorizing the non-negative regression coefficient matrix of $Y_1\approx MY_2$, clarifying that the tri-factorization acts on the coefficient matrix $M$ (not the data) and that it clusters covariate and response \emph{variables} (not samples).
\item We develop \textbf{inference for the block-correspondence matrix $\Theta$}---the central object that is absent from bi-factorized alternatives---using sample-wise robust (sandwich) standard errors and one-sided significance tests, with a wild bootstrap for interval estimates and as a robustness check. This distinguishes a \emph{cross-structure} (one response group linked to several covariate groups, exposed when $Q<R$) from the permutation structure that a square $\Theta$ tends to impose, turning the co-clustering into a tested group-level network linking covariate and response clusters; since for $Q<R$ the covariate-side split is not uniquely identified (Appendix~\ref{app:ident}), these paths are read conditionally on the selected factorization.
\item We \textbf{compare the method, theoretically and empirically, with reduced-rank regression, tri-NMF, and canonical correlation analysis}, reporting not only the differences but the \emph{similarities} (the shared low-rank subspace), and we delineate \emph{when} the non-negative, supervised formulation is genuinely needed (e.g.\ $p>n$, more variables than samples---written $p\gg n$ when far more---or when within-block and cross-block structure disagree) and when conventional methods suffice.
\item We \textbf{illustrate} the method on four two-block data sets---the Doubs fish--environment data (community ecology), the nutrimouse data (nutrigenomics), the FRANZOSA gut microbiome--metabolome study, originally $p\gg n$ and screened to $30$ variables per block for the analysis reported here, and the Wine data---spanning a permutation structure, a weak cross-structure, a pronounced cross-structure, and the classification special case in which a one-hot label reduces the method to a tested co-clustering of features against classes.
\end{enumerate}

The remainder of the paper is organized as follows. Section~\ref{sec2} introduces the model and its relation to reduced-rank regression, CCA, and tri-NMF. Section~\ref{sec3} gives the multiplicative update rules, and Section~\ref{sec:rank} the choice of the two ranks $(Q,R)$. Section~\ref{sec:inference} develops the inference for $\Theta$. Section~\ref{sec5} presents the data analyses and the systematic comparison with related methods, Section~\ref{sec:sim} validates the inference by a simulation study, and Section~\ref{sec6} concludes.

\section{Model and its relation to established methods}\label{sec2}

This section states the model (Section~\ref{subsec:model}), reads its regression coefficient matrix as a non-negative tri-factorization of bounded rank (Section~\ref{subsec:coef}), and locates it among reduced-rank regression, CCA/PLS, tri-NMF, and the autoencoder and growth-curve special cases (Section~\ref{subsec:relations}, with the positioning summarized in Table~\ref{tab:positioning}); Proposition~\ref{prop:rrr} makes the relation to RRR precise and Remark~\ref{rem:reparam} the necessity of the three-factor form. Section~\ref{subsec:cocluster} gives the co-clustering reading.

\subsection{The model}\label{subsec:model}

Let $Y_1=(\vec y_{1,1},\dots,\vec y_{1,N})=(y^{(1)}_{p,n})_{P_1\times N}$ collect, as columns, the response vectors of $N$ individuals on $P_1$ response variables, and let $Y_2=(y^{(2)}_{p,n})_{P_2\times N}$ collect the corresponding covariate vectors on $P_2$ predictors. Both blocks are taken non-negative; signed variables are made non-negative beforehand by a per-variable min--max transform (Section~\ref{sec5}). This transform is central to the method: it makes sign-free variables non-negative, but a negative covariate--response association can then be encoded only indirectly, through membership in a \emph{different} non-negative group rather than through a negative coefficient---a limitation we return to in Section~\ref{sec6}. Writing $M=X_1\Theta X_2$, model~\eqref{eq:model2} below is the multivariate linear regression $Y_1\approx MY_2$ of the responses on the covariates, with its coefficient matrix constrained to a non-negative tri-factorization. We approximate
\begin{equation}
\mathop{Y_1}_{P_1\times N}\;\approx\;\mathop{X_1}_{P_1\times Q}\,\mathop{\Theta}_{Q\times R}\,\mathop{X_2}_{R\times P_2}\,\mathop{Y_2}_{P_2\times N},
\label{eq:model2}
\end{equation}
with $X_1$ the \emph{response basis} ($Q$ response factors), $X_2$ the \emph{covariate basis} ($R$ covariate factors), and $\Theta$ the parameter (link) matrix. We require $X_1,X_2\ge0$ and remove the scale indeterminacy by normalizing each column of $X_1$ and each row of $X_2$ to sum to one; each column of $X_1$ is then a probability vector over the response variables and each row of $X_2$ a probability vector over the covariate variables, so the two bases are \emph{soft co-clusterings} of the two variable sets. We take $\Theta\ge0$, so that every covariate--response path is a non-negative contribution.

For individual $n$, \eqref{eq:model2} reads $\vec y_{1,n}\approx X_1\Theta X_2\,\vec y_{2,n}$. Writing $\vec s_n=X_2\vec y_{2,n}\in\mathbb{R}^{R}$ for the \emph{covariate scores} (a non-negative, low-dimensional summary of the predictors) and $\vec b_n=\Theta\vec s_n\in\mathbb{R}^{Q}$ for the \emph{response scores}, the model decomposes as encoding $\vec s_n=X_2\vec y_{2,n}$, linking $\vec b_n=\Theta\vec s_n$, and decoding $\vec y_{1,n}\approx X_1\vec b_n$.

\subsection{The coefficient matrix and its rank}\label{subsec:coef}

The covariate-to-response map is the \emph{regression coefficient matrix}
\begin{equation}
M=X_1\Theta X_2\quad(P_1\times P_2),\qquad \mathrm{rank}(M)\le\min(Q,R),
\label{eq:coef}
\end{equation}
so that \eqref{eq:model2} is the low-rank multivariate regression $Y_1\approx MY_2$. Since $\mathrm{rank}(M)\le\min(Q,R)$, the attainable fit is bounded by this rank; in our experiments the fit was essentially determined by $\min(Q,R)$, the larger of $Q,R$ adding resolution on its own side at little cost in fit. The non-negativity of $X_1,X_2$ is what turns the three factors into interpretable soft clusters rather than an arbitrary low-rank product.

\subsection{Relation to established methods}\label{subsec:relations}

Model~\eqref{eq:model2} sits at the intersection of three traditions; the differences are summarized in Table~\ref{tab:positioning}.

\paragraph{Reduced-rank regression, CCA, PLS.}
Dropping non-negativity, $M$ in \eqref{eq:coef} is an arbitrary rank-$r$ matrix with $r=\min(Q,R)$, and $Y_1\approx MY_2$ is reduced-rank regression \citep{anderson1951,izenman1975,reinsel1998}; the three factors then collapse to the left and right singular subspaces of the fitted values. Whitening the two blocks by their covariances turns the same construction into canonical correlation analysis \citep{hotelling1936}, and a covariance-based two-block criterion into partial least squares \citep{wold1975}. The proposed method is thus the \emph{non-negative} member of this family, related to RRR as NMF is to PCA. Proposition~\ref{prop:rrr} makes the relationship precise: the two share the rank class, RRR being its unconstrained optimum and NMF-RRR its non-negatively tri-factorizable member; their fitted values coincide only under a condition, but in practice their subspaces are close when the structure is strong (Section~\ref{sec5}).

\begin{proposition}\label{prop:rrr}
Fix $r=\min(Q,R)$. (i) The NMF-RRR coefficient $M=X_1\Theta X_2$ satisfies $\mathrm{rank}(M)\le r$, the same rank class as the rank-$r$ RRR coefficient. (ii) RRR minimizes $\|Y_1-MY_2\|_F^2$ over \emph{all} $M$ with $\mathrm{rank}(M)\le r$, whereas NMF-RRR minimizes it over the non-negatively tri-factorizable subset $\{X_1\Theta X_2:X_1,X_2\ge0~\text{(normalized)},\,\Theta\ge0\}$; hence $\|Y_1-\hat M_{\mathrm{RRR}}Y_2\|_F\le\|Y_1-\hat M_{\mathrm{NMF-RRR}}Y_2\|_F$. (iii) The two fitted matrices coincide iff some feasible non-negative tri-factorized coefficient $M$ satisfies $MY_2=\hat M_{\mathrm{RRR}}Y_2$; in general they differ---and even their column spaces need not coincide---their proximity being an empirical matter (Section~\ref{sec5}).
\end{proposition}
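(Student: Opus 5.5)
The three parts are elementary, and I would prove them in order, relying only on rank subadditivity, set inclusion, and two explicit small examples.

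For (i), I will use the standard inequality $\mathrm{rank}(AB)\le\min\{\mathrm{rank}(A),\mathrm{rank}(B)\}$. The factor $\Theta$ is $Q\times R$, so $\mathrm{rank}(\Theta)\le\min(Q,R)=r$. Hence $\mathrm{rank}(X_1\Theta X_2)\le\mathrm{rank}(\Theta)\le r$. This argument does not depend on non-negativity or normalization, so the NMF-RRR coefficient lies in the same rank class $\mathcal{M}_r=\{M:\mathrm{rank}(M)\le r\}$ as the rank-$r$ RRR coefficient.

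For (ii), let $\mathcal{T}=\{X_1\Theta X_2: X_1,X_2\ge0 \text{ normalized},\ \Theta\ge0\}$. Part (i) gives $\mathcal{T}\subseteq\mathcal{M}_r$. The normalization is harmless here: for nonzero columns and rows, rescaling can be absorbed into $\Theta$, so it does not change the set of attainable $M$. RRR is by definition a minimizer of $f(M)=\|Y_1-MY_2\|_F^2$ over $\mathcal{M}_r$, which exists by the Eckart--Young/Anderson solution. Therefore $f(\hat M_{\mathrm{RRR}})\le f(M)$ for every $M\in\mathcal{T}$, in particular for $\hat M_{\mathrm{NMF-RRR}}$. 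Taking square roots gives the stated inequality. The inequality also holds when $\hat M_{\mathrm{NMF-RRR}}$ is only a stationary point returned by the multiplicative updates rather than a global minimizer, because it is still feasible in $\mathcal{T}$.

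For (iii), both directions are short. If $\hat M_{\mathrm{NMF-RRR}}Y_2=\hat M_{\mathrm{RRR}}Y_2$, then $\hat M_{\mathrm{NMF-RRR}}$ itself is the required feasible $M$. Conversely, suppose some feasible $M\in\mathcal{T}$ satisfies $MY_2=\hat M_{\mathrm{RRR}}Y_2$. Then $f(M)$ equals the unconstrained-in-rank optimum. Combined with (ii), this makes $M$ a global minimizer over $\mathcal{T}$, so a globally optimal NMF-RRR fit equals $\hat M_{\mathrm{RRR}}Y_2$. If that fit is not unique, the statement should be read as ``some optimal NMF-RRR fit.'' I will also state the caveat that the claim refers to the global NMF-RRR optimum, since the algorithm may stop at a local solution.

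The step I expect to need the most care is the final claim that the fits, and even their column spaces, need not coincide. I would handle it with a counterexample. Take $Q=R=r=1$, $Y_2=I$ (so $N=P_2$), and $Y_1$ with a rank-one but sign-mixed best approximation, for instance $Y_1=\begin{pmatrix}1&0\\0&0\end{pmatrix}$ after an orthogonal rotation that produces entries of mixed signs. Simpler still, take $Y_1=\begin{pmatrix}1&-1\\-1&1\end{pmatrix}$ with $Y_2=I$. Its RRR rank-one fit is $Y_1$ itself, with column space $\mathrm{span}\{(1,-1)^\top\}$. Any non-negative rank-one $M=x_1\theta x_2$ has column space spanned by a non-negative vector, which cannot equal $\mathrm{span}\{(1,-1)^\top\}$. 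So both the fitted values and the column spaces differ, and whether they are close in a given data set can only be assessed empirically.
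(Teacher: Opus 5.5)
Your parts (i), (ii) and the equivalence in (iii) are correct. They match the reasoning the paper leaves implicit: the dimension count behind \eqref{eq:coef}, the inclusion of the feasible set in the rank-$r$ class, and attainment of the rank-$r$ lower bound by a feasible point. The paper writes out no further proof, and it supports ``need not coincide'' only empirically, through the second principal-angle cosines ($0.48$, $0.11$) of Section~\ref{sec5}.

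Your explicit counterexample is therefore the one new ingredient, and it is where the gap is. The model requires both blocks to be non-negative (Section~\ref{subsec:model}; the update \eqref{eq:upTheta} even presupposes $G_X\ge0$), and $Y_1=\bigl(\begin{smallmatrix}1&-1\\-1&1\end{smallmatrix}\bigr)$ is not. It shows only that the two feasible sets differ, not that the fits can differ on admissible data. Nor can your template be rescued by choosing a non-negative $Y_1$. With $Y_2=I$ and $r=1$, Perron--Frobenius applied to $Y_1Y_1^{\top}$ gives a non-negative leading left singular vector $u_1$, hence $v_1=Y_1^{\top}u_1/\sigma_1\ge0$. So the rank-one truncation $\sigma_1u_1v_1^{\top}$ of any $Y_1\ge0$ is itself non-negatively tri-factorizable, and the two fits coincide. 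In the paper's setting the sign mixing has to come from collinear covariates or from $r\ge2$.

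A repaired example takes $Q=R=1$, $Y_2=\bigl(\begin{smallmatrix}1&1&0\\0&1&1\end{smallmatrix}\bigr)$ (full row rank) and $Y_1=\bigl(\begin{smallmatrix}1&0&0\\0&0&1\end{smallmatrix}\bigr)$, both non-negative.
\begin{itemize}
\item The row space of $Y_2$ is $v^{\perp}$ with $v=(1,-1,1)^{\top}$, so with $P=I_3-vv^{\top}/3$ the RRR fit is the best rank-one approximation of $Y_1P$.
\item Since $Y_1PY_1^{\top}=\tfrac13\bigl(\begin{smallmatrix}2&-1\\-1&2\end{smallmatrix}\bigr)$ has the simple top eigenvalue $1$ with eigenvector $(1,-1)^{\top}$, the RRR fit is unique. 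It equals $\tfrac12\bigl(\begin{smallmatrix}1&0&-1\\-1&0&1\end{smallmatrix}\bigr)$, from $\hat M_{\mathrm{RRR}}=\tfrac12\bigl(\begin{smallmatrix}1&-1\\-1&1\end{smallmatrix}\bigr)$, with loss $1$ and column space $\mathrm{span}\{(1,-1)^{\top}\}$.
\item Every feasible fit has the form $\theta x_1(x_2Y_2)$ with $x_1\ge0$, so its column space is spanned by a non-negative vector. The best feasible loss is $3/2$, attained at $x_2=(1,0)$ or $(0,1)$.
\end{itemize}
So fitted values, losses and column spaces all differ on admissible data.

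You can also sharpen your hedge in (iii). Once a feasible $M$ attains the rank-$r$ minimum, every global NMF-RRR minimizer attains it too and is therefore itself an RRR solution. Hence when the RRR fit is unique---a gap between the $r$-th and $(r+1)$-th singular values of $Y_1P$, with $P$ the projection onto the row space of $Y_2$---all optimal NMF-RRR fits equal it. The qualification is needed only under such ties and, as you note, for non-global outputs of the multiplicative updates.
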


\noindent Part (ii) explains why RRR always attains the better in-sample fit, and part (iii) why ``shared subspace'' is a property to be checked, not assumed---which we do in Section~\ref{sec5}.

\begin{remark}\label{rem:reparam}
As a matrix class the tri-factorization $M=X_1\Theta X_2$ is \emph{not} more general than a rank-$r$ bi-factorization $M=UV^{\top}$; it is a reparameterization with $r=\min(Q,R)$. Its value is therefore not a larger class but the constrained, normalized, non-negative \emph{structure}: $X_1$ and $X_2$ are soft clusterings of the response and covariate variables and $\Theta$ is their $Q\times R$ correspondence, so the coefficient reads as a co-clustering of the two variable sets---which a (non-negative) bi-factorization $UV^{\top}$, tying both blocks to one shared set of $r$ components, does not deliver. Read the other way round, this is a \emph{necessity} argument: once one asks for a non-negative \emph{soft} co-clustering of \emph{both} variable sets---with possibly different numbers of groups $Q\neq R$ and an estimable correspondence between them---non-negativity alone does not suffice, since a bi-factorization forces a single shared rank ($Q=R$, in one-to-one correspondence; see the ``soft (shared)'' row of Table~\ref{tab:positioning}). The middle factor $\Theta$ is exactly what decouples the two groupings and links them, so the three-factor form $M=X_1\Theta X_2$ is the \emph{minimal} non-negative parameterization of the regression coefficient that yields two-sided soft co-clustering.
\end{remark}

\paragraph{Autoencoder and two-layer special cases.}
When $Y_2=Y_1=Y$, \eqref{eq:model2} becomes $Y\approx X_1\Theta X_2 Y$, a non-negative linear autoencoder whose signed counterpart is PCA \citep{eckart1936}. When $X_2=I_{P_2}$ (so $R=P_2$) and $Y_2=A$ is a known covariate matrix, $M=X_1\Theta$ and $Y\approx X_1\Theta A$ is NMF with covariates \citep{satoh2023,satoh2025nmfkc}, whose mean structure is the growth curve model \citep{potthoff1964}. The present model adds a \emph{learned} covariate basis $X_2$, i.e.\ it additionally clusters the covariates.

\paragraph{tri-NMF.}
The coefficient $M=X_1\Theta X_2$ is a non-negative matrix tri-factorization \citep{ding2006}. Two differences matter. First, Ding's tri-NMF factorizes a single observed non-negative matrix $V$ (clustering its rows and columns), whereas here the tri-factorized object is the \emph{coefficient} $M$, estimated by predicting $Y_1$ from $Y_2$. Second, their bi-orthogonal formulation constrains $F^{\top}F=I$ and $G^{\top}G=I$; since a non-negative matrix with orthonormal columns has at most one non-zero per row, exact orthogonality is a \emph{hard} clustering. We replace that constraint by the $\ell_1$ normalization of Section~\ref{subsec:norm}---unit column sums for $X_1$, unit row sums for $X_2$---which keeps the factors interpretable as cluster profiles while allowing a variable to load on several groups. The present model is in this sense a soft counterpart of the orthogonal tri-factorization, not an instance of it. The proposed method is therefore a \emph{supervised} tri-NMF: it co-clusters the response variables ($X_1$) and the covariate variables ($X_2$) with block correspondence $\Theta$, and---unlike tri-NMF, which only describes the matrix it is given---predicts a new individual through $\hat{\vec y}=X_1\Theta X_2\,\vec a$. This is not tri-NMF with one factor merely held fixed, however: because the tri-factorized object is the coefficient $M$ rather than $Y_1$ itself (the first difference above), the fitting criterion is the regression $Y_1\approx X_1\Theta X_2 Y_2$, not tri-NMF's direct reconstruction of an observed matrix---a different construction, not a constrained instance of the same one. Fixing $Y_2$ as a known covariate is what makes this a regression on a known design, the same device by which the growth curve model \citep{potthoff1964} fixes its design matrix (its mean structure coincides with ours, above), and it is precisely this that admits external covariates, prediction, and the connections to established multivariate models, greatly broadening the method's range of application; the author's NMF-VAR \citep{satoh2026var} and NMF-LAB \citep{satoh2026lab} broaden NMF from this same viewpoint.

\begin{table}[h]
\caption{Positioning of the proposed method against related factorizations (discussed in the text). ``Tests $\Theta$'' = inference on the block-correspondence entries; ``Predicts'' = a new individual's response is obtained from its covariates. In the two clustering columns, \emph{soft}/\emph{hard} denote a non-negative soft / a hard clustering of that side and \emph{soft (shared)} a soft membership tied to one shared rank ($Q=R$, one-to-one); \emph{--} marks no clustering of that side (signed loadings or sparse selection only), and \emph{($A$ given)} a known, unclustered covariate block. The bi-orthogonal tri-NMF of \citet{ding2006} is listed as hard because a non-negative matrix with orthonormal columns has at most one non-zero per row; its multiplicative updates approach that constraint only approximately, so fitted factors are in practice softer than the formulation prescribes. Rows follow \citet{ding2006} (tri-NMF), \citet{fitzgerald2022} (nn-PRRR), \citet{chen2012srrr,bunea2012,chenhuang2012,mishra2017secure,uematsu2019sofar} (sparse/co-sparse RRR, SOFAR), and \citet{yu2019bicluster} (bi-clustering of $M$).}\label{tab:positioning}
\centering
\footnotesize\setlength{\tabcolsep}{3pt}
\begin{tabular}{lcccccc}
\toprule
Method & Object factorized & Non-neg. & Clust.\ covariates & Clust.\ responses & Tests $\Theta$ & Predicts \\
\midrule
tri-NMF (bi-orthogonal)               & single matrix $V$    & yes & hard (orth.) & hard (orth.) & no & no \\
RRR / CCA / PLS                        & $M$ (low rank) & no  & --   & --      & no & yes \\
Sparse/co-sparse RRR, SOFAR           & $M$ (sparse) & no  & --   & --      & no & yes \\
nn-PRRR & $M{=}UV^{\top}$ & yes & soft (shared) & soft (shared) & no & yes \\
Sup.\ bi-clustering of $M$            & $M$ & no  & hard   & hard   & no & yes \\
NMF w/ covariates (GCM)                & $X_1\Theta$    & yes & ($A$ given) & soft ($X_1$) & one-sided & yes \\
\textbf{Proposed (NMF-RRR)}              & $M{=}X_1\Theta X_2$ & yes & soft ($X_2$) & soft ($X_1$) & yes & yes \\
\bottomrule
\end{tabular}
\end{table}

\paragraph{Model depth.}
Three factors are not arbitrary; they are the maximal meaningful depth. Because $\Theta$ carries no constraint beyond (optional) non-negativity, any deeper product with \emph{unconstrained} inner factors collapses---stacking matrices between the response basis $X_1$ and the known $Y_2$ multiplies them into one---so the model reduces to one of only two forms,
\[
Y_1 \approx X_1\,\Theta\,Y_2 \qquad\text{or}\qquad Y_1 \approx X_1\,\Theta\,X_2\,Y_2:
\]
the two-factor form, which puts a response basis on the raw covariates (NMF with covariates, i.e.\ $X_2=I$), or the three-factor form of this paper, which additionally clusters the covariates. No unconstrained product is richer than the three-factor model, so it is the \emph{maximal} structure that softly co-clusters \emph{both} variable sets. The reduction breaks only when the inner factors are themselves constrained (sparsity, orthogonality, or non-negativity), as in deep NMF \citep{trigeorgis2017}.

\subsection{Co-clustering interpretation}\label{subsec:cocluster}

Because $X_1$ and $X_2$ are non-negative, the columns of $X_1$ and the rows of $X_2$ are non-negative cluster \emph{profiles}---the normalized loadings of the response and, respectively, the covariate variables on each group; the soft membership of a variable across the groups is read from its \emph{relative} loadings across these profiles---for response variable $i$, $\pi_{iq}=(X_1)_{iq}/\sum_{h}(X_1)_{ih}$, and analogously for a covariate variable from the columns of $X_2$---giving a soft co-clustering of the two variable sets. The entry $\theta_{q,r}$ measures how strongly covariate group $r$ drives response group $q$. When $Q=R$ the optimum tends towards a near-permutation $\Theta$---a one-to-one correspondence between covariate and response groups---whereas $Q<R$ exposes cross-structure in which one response group integrates several covariate groups. Section~\ref{sec:inference} provides a test that distinguishes significant cross-structure from the permutation pattern that a square $\Theta$ tends to impose.

\section{Multiplicative update rules}\label{sec3}

We estimate the unknown factors of model~\eqref{eq:model2} by minimizing the squared Euclidean (Frobenius) discrepancy
\begin{equation}
D(X_1,\Theta,X_2)=\bigl\|Y_1-X_1\Theta X_2 Y_2\bigr\|_F^2 ,
\label{eq:objective}
\end{equation}
which corresponds to maximum likelihood under a Gaussian error model. Optimizing all factors jointly is non-convex, but, fixing two of the three blocks, $D$ is convex in the remaining one (Section~\ref{subsec:mono}); we therefore minimize $D$ by block coordinate descent, updating each block by a Lee--Seung multiplicative rule \citep{lee2000} that preserves non-negativity and does not increase $D$. Each rule has the form $\xi\leftarrow\xi\odot(\nabla^{-}\oslash\nabla^{+})$, where $-\tfrac12\partial D/\partial\xi=\nabla^{-}-\nabla^{+}$ is the split of the (negative) gradient into its non-negative ``pull'' and ``push'' parts, $\odot$ and $\oslash$ are the elementwise (Hadamard) product and division, and a small $\varepsilon$ is added to the denominator as a numerical safeguard. The monotonicity and stationarity statements below refer to the exact updates ($\varepsilon=0$) with strictly positive denominators; at such a fixed point a strictly positive coordinate has $\nabla^{-}=\nabla^{+}$, so the update leaves it unchanged.

\subsection{Precomputation avoiding the sample dimension}\label{subsec:precompute}

A naive evaluation of \eqref{eq:objective} costs $O(P_1QRP_2N)$ per iteration. Since the sample size $N$ enters only through inner products, we compute, \emph{once}, before the iteration,
\begin{equation}
S=Y_2Y_2^{\top}\in\mathbb{R}_{\ge0}^{P_2\times P_2},\qquad
G_0=Y_1Y_2^{\top}\in\mathbb{R}^{P_1\times P_2},
\label{eq:precompute}
\end{equation}
at cost $O(P_2^2N+P_1P_2N)$; $S\ge0$ because $Y_2\ge0$, while $G_0$ inherits the sign of $Y_1$ (it is non-negative after the per-variable transform of Section~\ref{sec5}). Within each iteration we then form the small matrices
\begin{align}
P&=X_1^{\top}X_1\in\mathbb{R}_{\ge0}^{Q\times Q}, &
S_X&=X_2 S X_2^{\top}\in\mathbb{R}_{\ge0}^{R\times R}, &
G_X&=X_1^{\top}G_0 X_2^{\top}\in\mathbb{R}^{Q\times R}, \notag\\
G_{0X}&=G_0 X_2^{\top}\in\mathbb{R}^{P_1\times R}, &
A&=X_1^{\top}G_0\in\mathbb{R}^{Q\times P_2}, &&
\label{eq:aux}
\end{align}
all independent of $N$. The updates below are written entirely in terms of \eqref{eq:precompute}--\eqref{eq:aux}.

\subsection{Update for the parameter matrix $\Theta$}\label{subsec:upTheta}

Fixing $X_1,X_2$ and writing $\hat Y_1=X_1\Theta(X_2Y_2)$,
\begin{equation}
\frac{\partial D}{\partial\Theta}
=-2X_1^{\top}\bigl(Y_1-X_1\Theta X_2Y_2\bigr)(X_2Y_2)^{\top}
=-2\bigl(G_X-P\,\Theta\,S_X\bigr),
\label{eq:gradTheta}
\end{equation}
using $X_1^{\top}Y_1(X_2Y_2)^{\top}=X_1^{\top}G_0X_2^{\top}=G_X$ and $X_1^{\top}X_1\Theta(X_2Y_2)(X_2Y_2)^{\top}=P\Theta S_X$. With $\Theta\ge0$ and $G_X\ge0$, the pull and push parts are $\nabla^{-}=G_X$ and $\nabla^{+}=P\Theta S_X$, giving
\begin{equation}
\boxed{\;\Theta\leftarrow\Theta\odot\frac{G_X}{P\,\Theta\,S_X+\varepsilon}\;}
\label{eq:upTheta}
\end{equation}

\subsection{Update for the response basis $X_1$}\label{subsec:upX1}

Fixing $\Theta,X_2$ and writing $B=\Theta X_2Y_2$ (the $Q\times N$ response scores), $D=\|Y_1-X_1B\|_F^2$ and $\partial D/\partial X_1=-2(Y_1B^{\top}-X_1BB^{\top})$. Since $Y_1B^{\top}=G_0X_2^{\top}\Theta^{\top}=G_{0X}\Theta^{\top}$ and $BB^{\top}=\Theta S_X\Theta^{\top}$,
\begin{equation}
\boxed{\;X_1\leftarrow X_1\odot\frac{G_{0X}\,\Theta^{\top}}{X_1\,\Theta S_X\Theta^{\top}+\varepsilon}\;}
\label{eq:upX1}
\end{equation}
which coincides with the ordinary NMF basis update with coefficient matrix $B$.

\subsection{Update for the covariate basis $X_2$}\label{subsec:upX2}

Fixing $X_1,\Theta$ and writing $H=X_1\Theta$ (the $P_1\times R$ effective decoder), $D=\|Y_1-HX_2Y_2\|_F^2$ and $\partial D/\partial X_2=-2(H^{\top}Y_1Y_2^{\top}-H^{\top}HX_2S)=-2(\Theta^{\top}A-\Theta^{\top}P\Theta\,X_2S)$, using $H^{\top}G_0=\Theta^{\top}X_1^{\top}G_0=\Theta^{\top}A$ and $H^{\top}H=\Theta^{\top}P\Theta$. Hence
\begin{equation}
\boxed{\;X_2\leftarrow X_2\odot\frac{\Theta^{\top}A}{\Theta^{\top}P\,\Theta\,X_2S+\varepsilon}\;}
\label{eq:upX2}
\end{equation}

\subsection{Normalization}\label{subsec:norm}

The factorization is invariant to $X_1\leftarrow X_1D^{-1}$, $\Theta\leftarrow D\Theta E$, $X_2\leftarrow E^{-1}X_2$ for positive diagonal $D,E$. After each sweep we therefore rescale each column of $X_1$ to sum to one and each row of $X_2$ to sum to one, absorbing the scales into $\Theta$; this fixes the scaling indeterminacy above and makes the columns of $X_1$ and the rows of $X_2$ probability vectors (cluster profiles), leaving $\hat Y_1$ unchanged. It does not by itself make the tri-factorization unique---the remaining non-uniqueness, most visible for $Q<R$ (Section~\ref{subsec:rankguide}), is why the inference of Section~\ref{sec:inference} is conditioned on the estimated bases rather than on a claim that the decomposition is uniquely identified.

The normalization also pins a scalar summary of $\Theta$.

\begin{proposition}\label{prop:budget}
Under the normalization above, the entries of $\Theta$ sum to the grand total of the coefficient matrix:
\begin{equation}
\sum_{q=1}^{Q}\sum_{r=1}^{R}\theta_{qr}=\vec1_{P_1}^{\top}M\,\vec1_{P_2},\qquad M=X_1\Theta X_2 .
\label{eq:budget}
\end{equation}
\end{proposition}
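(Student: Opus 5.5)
The plan is a direct computation: write the grand total of $M$ as a bilinear form and absorb the all-ones vectors into the outer factors using the two normalizations. I would write
\[
\vec1_{P_1}^{\top}M\,\vec1_{P_2}=\bigl(\vec1_{P_1}^{\top}X_1\bigr)\,\Theta\,\bigl(X_2\vec1_{P_2}\bigr),
\]
which holds by associativity of the matrix product. No earlier result beyond the normalization convention of Section~\ref{subsec:norm} is needed.

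The key steps, in order, are as follows.

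\begin{enumerate}
\item Translate ``each column of $X_1$ sums to one'' into $\vec1_{P_1}^{\top}X_1=\vec1_Q^{\top}$. The $q$-th entry of the row vector $\vec1_{P_1}^{\top}X_1$ is exactly the sum of column $q$.
\item Translate ``each row of $X_2$ sums to one'' into $X_2\vec1_{P_2}=\vec1_R$.
\item Substitute both identities to get $\vec1_Q^{\top}\Theta\,\vec1_R=\sum_{q,r}\theta_{qr}$, which is \eqref{eq:budget}.
\end{enumerate}

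The only point needing care, and the nearest thing to an obstacle, is keeping the orientation of the normalization straight. If the rows of $X_1$ were normalized instead of its columns, the identity would fail. I would therefore state explicitly that the convention is column sums for the $P_1\times Q$ matrix $X_1$ and row sums for the $R\times P_2$ matrix $X_2$.

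I would also remark that the argument uses neither non-negativity nor optimality. Hence \eqref{eq:budget} holds for any normalized iterate after each sweep of the algorithm, not only at the fitted solution. Because the rescaling $X_1\leftarrow X_1D^{-1}$, $\Theta\leftarrow D\Theta E$, $X_2\leftarrow E^{-1}X_2$ leaves $M$ unchanged, the quantity $\sum_{q,r}\theta_{qr}$ is determined by $M$ alone once the normalization is imposed.
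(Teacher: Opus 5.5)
Your proof is correct and matches the paper's own argument: it likewise writes $\vec1_{P_1}^{\top}M\vec1_{P_2}=(\vec1_{P_1}^{\top}X_1)\,\Theta\,(X_2\vec1_{P_2})$ and substitutes $\vec1_{P_1}^{\top}X_1=\vec1_Q^{\top}$ and $X_2\vec1_{P_2}=\vec1_R$. Your side remark is also right: the argument uses neither non-negativity nor optimality, so the identity holds at every normalized iterate, not only at the fitted solution.
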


\noindent This is immediate from $\vec1_{P_1}^{\top}X_1=\vec1_{Q}^{\top}$ and $X_2\vec1_{P_2}=\vec1_{R}$, which give $\vec1^{\top}M\vec1=(\vec1^{\top}X_1)\Theta(X_2\vec1)=\vec1_{Q}^{\top}\Theta\vec1_{R}$. The identity is elementary and, on its own, says nothing about how $\Theta$ is estimated. It acquires statistical content only in combination with Appendix~\ref{app:ident}: when $M$ is itself identified---which needs $Y_2$ to have full row rank, and so excludes the $P_2>N$ example of Section~\ref{subsec:nutri}, where the regression pins only $MY_2$---and when its grand total is stable across estimates, \eqref{eq:budget} makes the $QR$ path coefficients share a fixed total, so that an error in one entry is necessarily offset by the others. We stress what this does and does not deliver. It is an algebraic constraint on the estimation errors, and it does not by itself say which paths gain the mass and which lose it; the direction is an empirical matter, reported in Section~\ref{sec:sim}, where the compensation the constraint predicts is also verified numerically.

\subsection{Monotonicity and the algorithm}\label{subsec:mono}

\begin{proposition}\label{prop:convex}
With the other two blocks fixed, $D$ is convex in the remaining block: the Hessians are $2(S_X\otimes P)$ for $\Theta$, $2(BB^{\top}\otimes I_{P_1})$ for $X_1$, and $2(S\otimes H^{\top}H)$ for $X_2$, each a Kronecker product of positive-semidefinite Gram matrices and hence positive semidefinite.
\end{proposition}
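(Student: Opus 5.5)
The plan is to treat each block separately and write $D$, with the other two blocks fixed, as a quadratic function of the vectorized block. The Hessian of that quadratic is then read off from the identity $\mathrm{vec}(AXB)=(B^{\top}\otimes A)\,\mathrm{vec}(X)$. The main tools are two facts: $\|Z\|_F^2=\|\mathrm{vec}(Z)\|_2^2$, and the Kronecker product of positive-semidefinite matrices is positive semidefinite, since its eigenvalues are the pairwise products of the factors' eigenvalues, which are non-negative. A quadratic whose Hessian is PSD is convex, so each claim reduces to computing the Hessian.

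\emph{Block $\Theta$.} Put $C=X_2Y_2$, so that $D=\|Y_1-X_1\Theta C\|_F^2=\|\mathrm{vec}(Y_1)-(C^{\top}\otimes X_1)\mathrm{vec}(\Theta)\|_2^2$. The Hessian in $\mathrm{vec}(\Theta)$ is
\[
2(C^{\top}\otimes X_1)^{\top}(C^{\top}\otimes X_1)=2(CC^{\top}\otimes X_1^{\top}X_1)=2(S_X\otimes P),
\]
using the mixed-product rule $(A\otimes B)(C\otimes D)=AC\otimes BD$, together with $CC^{\top}=X_2SX_2^{\top}=S_X$ and $X_1^{\top}X_1=P$. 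This also agrees with the gradient \eqref{eq:gradTheta}, since $\mathrm{vec}(P\Theta S_X)=(S_X\otimes P)\mathrm{vec}(\Theta)$ because $S_X$ is symmetric.

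\emph{Block $X_1$.} Here $D=\|Y_1-X_1B\|_F^2$ with $X_1B=I_{P_1}X_1B$, so the design matrix is $B^{\top}\otimes I_{P_1}$. Its Hessian is $2(BB^{\top}\otimes I_{P_1})$.

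\emph{Block $X_2$.} Here $D=\|Y_1-HX_2Y_2\|_F^2$, so the design matrix is $Y_2^{\top}\otimes H$. Its Hessian is $2(Y_2Y_2^{\top}\otimes H^{\top}H)=2(S\otimes H^{\top}H)$.

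In all three cases each Kronecker factor is a Gram matrix ($CC^{\top}$, $X_1^{\top}X_1$, $BB^{\top}$, $I$, $Y_2Y_2^{\top}$, $H^{\top}H$), hence symmetric PSD, so every Hessian is PSD. Because $D$ is exactly quadratic in each block, the Hessian is constant, and a constant PSD Hessian gives convexity on the whole space. Convexity therefore also holds on the non-negative orthant, which is a convex set. The only step needing care is the Kronecker bookkeeping: getting the factor order right under column-major vectorization, and using the symmetry of $S_X$ and $S$ so that no transposes appear. Nothing here is a genuine obstacle.
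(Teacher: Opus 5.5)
Your proposal is correct and takes the same route as the paper. The proposition's own justification is that each block Hessian is a Kronecker product of Gram matrices. Your vectorization via $\mathrm{vec}(AXB)=(B^{\top}\otimes A)\,\mathrm{vec}(X)$, together with the mixed-product rule, supplies the derivation of $2(S_X\otimes P)$, $2(BB^{\top}\otimes I_{P_1})$ and $2(S\otimes H^{\top}H)$, which the paper states without detail.
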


\begin{proposition}\label{prop:mono}
Take $\varepsilon=0$ with strictly positive denominators. Then each of \eqref{eq:upTheta}, \eqref{eq:upX1}, \eqref{eq:upX2} does not increase $D$; every strictly positive fixed point satisfies the Karush--Kuhn--Tucker conditions $\xi\ge0$, $\partial D/\partial\xi\ge0$, $\xi\odot\partial D/\partial\xi=0$, and a fixed point with a zero coordinate does so provided that coordinate is not \emph{zero-locked} (i.e.\ its gradient there is non-negative). The $\varepsilon>0$ in \eqref{eq:upTheta}--\eqref{eq:upX2} is a numerical safeguard only.
\end{proposition}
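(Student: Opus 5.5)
The plan is to treat each of the three updates as the Lee--Seung auxiliary-function argument applied to a non-negative quadratic. Fixing two blocks, Proposition~\ref{prop:convex} shows that $D$ restricted to the remaining block $\xi$ is a convex quadratic
\[
F(\xi)=\mathrm{const}-2\langle \nabla^{-},\xi\rangle+\langle \xi,\mathcal{K}\xi\rangle ,
\]
where $\mathcal{K}$ is the linear map given by the Kronecker Hessian. The three cases are:
\begin{itemize}
\item $\mathcal{K}\Theta=P\Theta S_X$ for $\Theta$;
\item $\mathcal{K}X_1=X_1\Theta S_X\Theta^{\top}$ for $X_1$;
\item $\mathcal{K}X_2=\Theta^{\top}P\Theta X_2S$ for $X_2$.
\end{itemize}
The key structural fact is that in every case $\mathcal{K}$ has entrywise non-negative coefficients. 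This holds because $P,S_X,S,\Theta,X_1,X_2\ge0$, and it is exactly where $Y_2\ge0$ enters, through $S\ge0$. I would also record that $\nabla^{-}=G_X$, $G_{0X}\Theta^{\top}$ or $\Theta^{\top}A$ is non-negative, which follows from $G_0\ge0$ after the min--max transform.

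Next I would build the standard majorizer. At the current iterate $\xi^t>0$, define
\[
K(\xi,\xi^t)=F(\xi^t)+\langle\nabla F(\xi^t),\xi-\xi^t\rangle+\sum_{i}\frac{(\mathcal{K}\xi^t)_i}{\xi^t_i}(\xi_i-\xi^t_i)^2 ,
\]
where $i$ runs over the entries of the block. I must show that the diagonal matrix $\mathrm{diag}\bigl((\mathcal{K}\xi^t)_i/\xi^t_i\bigr)$ dominates $\mathcal{K}$ in the psd order. This follows from the Lee--Seung lemma: for a symmetric matrix $K$ with non-negative entries and $v>0$, the matrix $\mathrm{diag}(Kv/v)-K$ is psd, since its quadratic form equals $\tfrac12\sum_{ij}K_{ij}v_iv_j(u_i/v_i-u_j/v_j)^2\ge0$. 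Applying it requires writing $\mathcal{K}$ as a symmetric non-negative matrix acting on $\mathrm{vec}(\xi)$, which is the Kronecker form of Proposition~\ref{prop:convex}. Then $K\ge F$ with equality at $\xi^t$.

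Minimizing $K(\cdot,\xi^t)$ coordinatewise gives
\[
\xi_i=\xi^t_i-\xi^t_i\,\frac{\partial_iF}{2(\mathcal{K}\xi^t)_i}=\xi^t_i\,\frac{\nabla^{-}_i}{\nabla^{+}_i},
\]
which is exactly \eqref{eq:upTheta}, \eqref{eq:upX1} and \eqref{eq:upX2}. The chain $F(\xi^{t+1})\le K(\xi^{t+1},\xi^t)\le K(\xi^t,\xi^t)=F(\xi^t)$ then gives monotonicity. The positivity of the denominators makes the weights finite. If some $\xi^t_i=0$, that coordinate stays at zero and can simply be dropped from the majorization. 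Because the normalization step of Section~\ref{subsec:norm} leaves $X_1\Theta X_2$ unchanged, it does not affect $D$.

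For the KKT part, I would argue as follows. At a fixed point with $\xi_i>0$, the update gives $\nabla^{-}_i=\nabla^{+}_i$, so $\partial_iD=-2(\nabla^{-}_i-\nabla^{+}_i)=0$. Hence stationarity and complementary slackness hold, and non-negativity is preserved by construction. At a zero coordinate, complementarity is automatic, and $\partial_iD\ge0$ is exactly the stated not-zero-locked hypothesis. So the final claim is essentially definitional.

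The main obstacle I expect is checking that the Hessian operator is entrywise non-negative once vectorized, and symmetrizing it correctly. This is clearest for $X_2$, where $S\otimes\Theta^{\top}P\Theta$ needs $S\ge0$. I would verify each Kronecker factor separately. If $G_0$ had negative entries the multiplicative form would fail, so I would state that non-negativity assumption explicitly.
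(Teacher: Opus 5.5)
Your proposal is correct and takes the same route as the paper, which invokes the Lee--Seung auxiliary-function argument on each convex block subproblem of Proposition~\ref{prop:convex} and the standard fixed-point stationarity analysis of multiplicative updates for the KKT part, and omits the details. You supply exactly those details: the diagonal majorizer justified by $\mathrm{diag}(Kv/v)-K\succeq 0$ for the symmetric, entrywise non-negative Kronecker Hessians (using $Y_1,Y_2\ge 0$), restriction to the positive face for zero coordinates, and the observation that the zero-coordinate KKT claim is definitional given the paper's meaning of ``not zero-locked''.
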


\noindent The monotonicity follows the auxiliary-function argument of \citet{lee2000} applied to each convex subproblem; that a strictly positive fixed point satisfies the Karush--Kuhn--Tucker conditions, and the caveat about zero-locked coordinates, are the stationarity properties of multiplicative updates analyzed by \citet{lin2007}. We omit the details. Algorithm~\ref{alg:nrrr} collects the steps. Initialization uses the three-step NMF scheme of \citet{satoh2025nmfkc}; the convergence tolerance is set tightly (Algorithm~\ref{alg:nrrr}) because, when a link lies on the non-negativity boundary, the multiplicative updates approach it slowly and a looser tolerance can stop prematurely. The $k$-means initialization is run from several restarts ($20$ in our analyses), which we recommend whenever the bases feed inference. We confirmed the stability on the data of Section~\ref{sec5}: with multi-start and a tight tolerance, all $20$ random initializations converged to the \emph{same} solution---the in-sample $R^2$ was constant and both the response and the covariate co-clusterings were reproduced (adjusted Rand index $\ge0.999$ to the default for every example). Single-start fitting (one restart) occasionally settles on a slightly worse local optimum---e.g.\ on nutrimouse the response partition falls to adjusted Rand index $0.90$ and $R^2$ to $0.152$---which multi-start removes; the only residual non-uniqueness is the labelling of the \emph{over-parameterized} covariate side when $Q<R$, exactly the (non-)identifiability of Section~\ref{subsec:rankguide}, which leaves the identified response groups and the fit unaffected. This mirrors the initialization robustness reported for NMF with covariates by \citet{satoh2026gcm}.

\begin{algorithm}[h]
\caption{Multiplicative updates for NMF-RRR}
\label{alg:nrrr}
\begin{algorithmic}[1]
\REQUIRE $Y_1\in\mathbb{R}_{\ge0}^{P_1\times N}$, $Y_2\in\mathbb{R}_{\ge0}^{P_2\times N}$, ranks $(Q,R)$, convergence tolerance $\varepsilon_{\mathrm{tol}}$ (e.g.\ $10^{-8}$)
\STATE Precompute $S=Y_2Y_2^{\top}$ and $G_0=Y_1Y_2^{\top}$; initialize $X_1,\Theta,X_2$ by the three-step scheme of \citet{satoh2025nmfkc}, whose $k$-means bases are taken as the best of $20$ multistarts
\REPEAT
  \STATE Update $X_1$ by \eqref{eq:upX1}; rescale each column of $X_1$ to sum one, absorbing the scale into the rows of $\Theta$
  \STATE Update $\Theta$ by \eqref{eq:upTheta}
  \STATE Update $X_2$ by \eqref{eq:upX2}; rescale each row of $X_2$ to sum one, absorbing the scale into the columns of $\Theta$
\UNTIL{$|D^{(t)}-D^{(t-1)}|/\max(D^{(t-1)},1)<\varepsilon_{\mathrm{tol}}$}
\RETURN $X_1,\Theta,X_2$ and the coefficient $M=X_1\Theta X_2$
\end{algorithmic}
\end{algorithm}

The auxiliary quantities \eqref{eq:aux} are refreshed from the current factors before each block update (Gauss--Seidel), and the column/row normalization is applied immediately after the $X_1$ and $X_2$ updates so that the columns of $X_1$ and the rows of $X_2$ remain probability vectors throughout. The per-iteration cost is dominated by the auxiliary products $G_{0X}=G_0X_2^{\top}$ and $A=X_1^{\top}G_0$, at $O(P_1P_2R)$ and $O(P_1P_2Q)$, together with $S_X=X_2SX_2^{\top}$ and $\Theta^{\top}P\Theta X_2S$, at $O(RP_2^2+R^2P_2+QRP_2)$; the $N$-dependent work occurs only in the one-off precomputation \eqref{eq:precompute}, so for $P_2\ll N$ the method scales like ordinary NMF. When the covariate dimension $P_2$ is itself very large (thousands of variables), the $P_2\times P_2$ matrix $S=Y_2Y_2^{\top}$ dominates both memory and the $O(P_2^2)$ cost; one then either keeps the sample-dimension form of the updates (avoiding $S$) or exploits sparsity or a low-rank/landmark approximation of $Y_2$. In the data of Section~\ref{sec5} we side-step this by pre-screening each block to its most variable high-prevalence variables, which also keeps the co-clustering interpretable.

The method and the cross-validation of Section~\ref{sec:rank} are available in the \texttt{nmfkc} R package \citep{satoh2025nmfkc}.

\section{Choice of the two ranks}\label{sec:rank}

The model has two ranks: $Q$ response factors and $R$ covariate factors. Because the attainable approximation has rank $\min(Q,R)$, the \emph{in-sample} fit is monotone non-decreasing in both $Q$ and $R$ and cannot be used to choose them; a \emph{predictive} criterion is required. We use cross-validation, in two complementary forms.

\subsection{Element-wise cross-validation}\label{subsec:ecv}

To select the ranks for \emph{approximation}, we cross-validate over the entries of $Y_1$ \citep[in the spirit of][]{wold1978,owen2009}. Partition the $P_1N$ entries at random into $V$ folds; for each fold $v$, set the held-out entries to ``missing'' through a $0/1$ weight matrix $W^{(v)}$ and minimize the weighted objective $\|W^{(v)}\odot(Y_1-X_1\Theta X_2Y_2)\|_F^2$, for which the multiplicative rules of Section~\ref{sec3} extend to analogous masked (weighted) updates---the weighting being carried through the sample dimension rather than the precomputation of Section~\ref{subsec:precompute}---as implemented in the \texttt{nmfkc} package. Predicting the held-out entries and accumulating their squared errors over folds yields
\begin{equation}
\sigma(Q,R)=\Bigl(\tfrac{1}{P_1N}\textstyle\sum_{v}\sum_{(p,n)\in\text{fold }v}\bigl(y^{(1)}_{p,n}-\hat y^{(1)}_{p,n}\bigr)^2\Bigr)^{1/2},
\label{eq:ecv}
\end{equation}
and we choose the $(Q,R)$ with the smallest $\sigma$, or the ``elbow'' beyond which $\sigma$ flattens. Where a single automatic rule is needed---as in the simulations of Section~\ref{sec:sim}, which repeat the selection thousands of times---we use the parsimonious one-standard-error rule, applied to the mean squared error $\sigma^2$ of \eqref{eq:ecv} rather than to $\sigma$ itself: among the pairs whose mean squared cross-validation error lies within one standard error of its minimum---that standard error being computed across the $V$ fold-wise mean squared errors of the minimizing pair---take the pair with the smallest $Q+R$, ties broken by the smaller $Q$. This element-wise scheme does not require held-out individuals, uses all $P_1N$ entries, and is well suited to the small samples typical of two-block data. It targets the rank of the \emph{approximation} $M=X_1\Theta X_2$ and follows the bi-cross-validation tradition for component-model rank selection, in which holding out sub-blocks of entries---rather than whole rows or columns---gives a criterion that penalizes an over-large rank and yields a clean minimum \citep{wold1978,bro2008,owen2009}. When out-of-sample \emph{prediction} is the goal, the complementary criterion is the sample-wise scheme of Section~\ref{subsec:scv}: because the response of a held-out individual is predicted from its covariates alone, sample-wise cross-validation is a genuine new-individual criterion here and does not suffer the score-re-estimation dependence that undermines it in unsupervised component models.

\subsection{Sample-wise cross-validation}\label{subsec:scv}

Because the model also \emph{predicts}, the ranks may instead be chosen for out-of-sample prediction. Partition the $N$ individuals into folds; fit $X_1,\Theta,X_2$ on the training individuals; and, for each held-out individual, predict its response from its covariates through the explicit encoder, $\hat{\vec y}_1=X_1\Theta X_2\vec y_2$. The $(Q,R)$ maximizing the held-out $R^2$ targets the \emph{predictive} rank directly and would be the criterion of choice when prediction, rather than description, is the primary goal; the analyses of Section~\ref{sec5} select their ranks by the element-wise scheme above.

\subsection{The asymmetric roles of $Q$ and $R$}\label{subsec:rankguide}

Because $\mathrm{rank}(M)\le\min(Q,R)$ and, in our experiments, the attained fit was essentially determined by it, the larger of the two ranks adds resolution on its own side at little cost in fit. The two ranks are therefore asymmetric, and the choice between $Q<R$ and $Q>R$ is a modelling decision, not a fit issue. The key fact is one of identifiability: in the non-degenerate case $\mathrm{rank}(M)=\min(Q,R)$, the side carrying \emph{more} factors than this rank is over-parameterized, so its basis columns span a lower-dimensional space and cease to be individually identifiable.

\paragraph{$Q<R$ (more covariate than response factors).}
If $\mathrm{rank}(M)=Q$ (the non-degenerate case), the response basis $X_1$ is not rank-redundant: its $Q$ columns span a full $Q$-dimensional space. Under the response-separability condition of Remark~\ref{rem:ident}, $X_1$ and the combined covariate signature $G=\Theta X_2$ are then identifiable up to relabelling, so the response groups---the response types we wish to name---are well defined. The surplus resolution falls on the covariate side, where each row of $\Theta$ (one response group) may load on several of the $R$ covariate groups. This is exactly the cross-structure ``one response group integrates several covariate drivers'' (e.g.\ several environmental gradients jointly shaping one species guild), whose significance is assessed in Section~\ref{sec:inference}; at an adequate sample size it is statistically supported (Section~\ref{sec5}).

\paragraph{$Q>R$ (more response than covariate factors).}
If $\mathrm{rank}(M)=R$ (the non-degenerate case, so $\mathrm{rank}(M)<Q$), the situation is problematic precisely on the side we care about. The $Q$ columns of $X_1$ are forced to span only an $R$-dimensional subspace, so they become collinear and the $Q$ response groups are \emph{not separately identifiable}; equivalently, there are more response groups than the $R$-dimensional covariate bottleneck can independently drive, and the surplus $Q-R$ response factors carry no new structure; their individual paths are then not uniquely interpretable, irrespective of their conditional significance. On the Doubs data of Section~\ref{sec5}, for instance, $Q{=}4,R{=}2$ attains \emph{exactly} the same $R^2=0.435$ as $Q{=}2,R{=}2$: in this example the surplus response factors did not improve the fit (the attainable rank being $\min(Q,R)=2$). Although $Q>R$ can be read as a dual cross-structure (``one covariate driver spreads over several response groups''), this reading is confounded with the response-factor redundancy and is rarely worthwhile.

\paragraph{Recommendation.}
We therefore choose $Q$ near the predictive elbow of $\sigma(Q,R)$ and then take $Q\le R$, so that the response side avoids rank-induced redundancy---separate identifiability additionally requiring the separability of Remark~\ref{rem:ident}---while the covariate side is free to reveal cross-structure. $Q>R$ is justified only when a deliberately finer \emph{description} of the responses is wanted, with the understanding that the surplus response groups are neither separately identifiable nor independently predictable. The criterion $\sigma(Q,R)$ of \eqref{eq:ecv} is evaluated in parallel over the $V$ folds (we use $V=5$).

\begin{remark}[Identifiability of the factors]\label{rem:ident}
The over-parameterization just described concerns the \emph{subspaces} spanned by the bases; a sharper statement holds for the factors themselves. Beyond the scaling removed by the normalization of Section~\ref{subsec:norm}, the tri-factorization is pinned down only under a \emph{separability} (anchor-variable) condition. If each response group owns a variable that loads on it alone and $\mathrm{rank}(M)=Q$, then the response profiles $X_1$ and the covariate signatures $G=\Theta X_2$ are identifiable up to relabelling; if in addition $R=Q$, each covariate group owns such an anchor, and $\Theta$ is nonsingular, then $\Theta$ and $X_2$ are separately identifiable up to independent permutations of the two label sets. When $Q<R$---the regime we recommend for exposing cross-structure---this last step fails: $M$ fixes $X_1$ and the product $\Theta X_2$, but not the split of $G$ into $R$ covariate groups and their correspondence $\Theta$, absent a further geometric condition. The inference of Section~\ref{sec:inference} conditions on the estimated bases, and this conditioning does not itself remove the factorization non-uniqueness: a significant $\theta_{qr}$ is a path of the \emph{selected} factorization, not of a uniquely determined one. This is why we read the covariate-side cross-structure through the tested paths of $\Theta$ and the anchor (driver) variables rather than through the exact membership of every covariate variable. The formal statement and proof are in Appendix~\ref{app:ident}.
\end{remark}

\section{Conditional inference for the parameter matrix}\label{sec:inference}

The entries of $\Theta$ quantify how strongly each covariate group drives each response group, so they are the natural targets of inference. We follow the conditional, growth-curve-model approach of \citet{satoh2026gcm}, extended here to condition on \emph{two} estimated bases. The inference is therefore \emph{conditional} and \emph{post-selection}---not the formal selective-inference construction that conditions on the whole selection event, but a Wald test conditional on the chosen bases. Fixing these data-chosen bases makes $\Theta$ algebraically estimable within the working model (when the design $ZZ^{\top}$ is nonsingular), but does not by itself confer statistical validity; and for $Q<R$ the null $\theta_{qr}=0$ is defined only relative to the selected factorization (Section~\ref{subsec:rankguide}), so ``presence of a path'' is not a decomposition-invariant hypothesis. We therefore read the test as an \emph{exploratory screen for the presence of a path in the selected coordinate system}, not a decomposition-invariant confirmatory inference; its calibration, and its limits, are examined by simulation in Section~\ref{sec:sim}.

The multiplicative updates make no distributional assumption. To attach standard errors we introduce, solely for inference and without altering the point estimate, a Gaussian working model conditional on the optimized bases $\hat X_1,\hat X_2$. Writing the estimated covariate scores $Z=\hat X_2 Y_2\in\mathbb{R}^{R\times N}$,
\begin{equation}
Y_1=\hat X_1\,\Theta\,Z+\mathcal E,\qquad \mathrm{vec}(\mathcal E)\sim N(\vec0,\sigma^2 I_{P_1N}),
\label{eq:workmodel}
\end{equation}
which is the GCM mean structure with design matrix $\hat X_1$ and covariate matrix $Z$. Vectorizing, $\mathrm{vec}(Y_1)=(Z^{\top}\otimes\hat X_1)\,\mathrm{vec}(\Theta)+\mathrm{vec}(\mathcal E)$ is linear in $\mathrm{vec}(\Theta)$, with Fisher information $\mathcal I(\Theta)=\sigma^{-2}(ZZ^{\top}\otimes\hat X_1^{\top}\hat X_1)$. Treating $\hat X_1,\hat X_2$ as fixed,
\begin{equation}
\mathrm{vec}(\hat\Theta)\ \dot\sim\ N\!\Bigl(\mathrm{vec}(\Theta),\ \sigma^2\bigl(ZZ^{\top}\otimes\hat X_1^{\top}\hat X_1\bigr)^{-1}\Bigr),
\qquad
\hat\sigma^2=\frac{\|Y_1-\hat X_1\hat\Theta Z\|_F^2}{P_1N-QR},
\label{eq:thetadist}
\end{equation}
Rather than rely on the working covariance $\mathcal I(\Theta)^{-1}$ of \eqref{eq:thetadist}---the iid Gaussian error of \eqref{eq:workmodel} being only a working assumption---we report \emph{sample-wise robust} (sandwich) standard errors,
\begin{equation}
\widehat{\mathrm{Var}}\bigl(\mathrm{vec}(\hat\Theta)\bigr)
=\hat{\mathcal I}^{-1}\Bigl(\tfrac{N}{N-1}\textstyle\sum_{n=1}^{N}\vec s_n\vec s_n^{\top}\Bigr)\hat{\mathcal I}^{-1},
\qquad
\vec s_n=\mathrm{vec}\bigl(-\hat\sigma^{-2}\,\hat X_1^{\top}\vec r_n\vec z_n^{\top}\bigr),
\label{eq:sandwich}
\end{equation}
where $\vec r_n=\vec y_{1,n}-\hat X_1\hat\Theta\vec z_n$ is the residual of individual $n$, $\vec z_n$ its covariate score, and $\hat{\mathcal I}$ is the information of \eqref{eq:thetadist} evaluated at $\hat\sigma^2$; the standard errors are the square roots of the diagonal of \eqref{eq:sandwich}. Summing the score outer products over individuals treats individuals as independent but leaves the within-individual covariance \emph{across responses} unrestricted, so these standard errors are robust to the cross-response correlation that \eqref{eq:workmodel} ignores, and target the same asymptotic covariance as the model-based $\hat{\mathcal I}^{-1}$ when \eqref{eq:workmodel} is correctly specified.

\paragraph{One-sided boundary test.}
Because each $\theta_{qr}\ge0$, significance is assessed by the one-sided boundary test $H_0:\theta_{qr}=0$ against $H_1:\theta_{qr}>0$ using $z=\hat\theta_{qr}/\mathrm{SE}$, with the correspondingly one-sided interval $[\max(0,\hat\theta_{qr}-z_{\alpha}\mathrm{SE}),\infty)$ rather than a symmetric Wald interval that could fall below zero. A contrast of paths, being unconstrained in sign, takes a two-sided interval.

\paragraph{Wild bootstrap.}
When the information is ill-conditioned---for example when $R$ is large or an effect lies on the non-negativity boundary---we also compute a wild (multiplier) bootstrap \citep{wu1986,liu1988} of the per-sample residual scores ($B=500$ centered exponential multipliers, $w_n=\zeta_n-1$ with $\zeta_n\sim\mathrm{Exp}(1)$; cf.\ \citealp{mammen1993} for a two-point alternative), with non-negative projection of the resampled estimates. Attaching the multiplier to each individual's whole residual score vector, it rests on the same independence-across-individuals assumption as \eqref{eq:sandwich}, and yields bootstrap standard errors and percentile intervals as a robustness check. For numerical stability we invert $\hat{\mathcal I}+10^{-8}I$, falling back on a generalized inverse should that fail. Because the bootstrap re-uses the same inverse, it provides a robustness check on the distributional approximation behind \eqref{eq:thetadist}; it is not a remedy for an ill-conditioned or non-identified information matrix, which only a change of estimand or of design can cure. The $z$ and $p$ values reported in Section~\ref{sec5} use the sandwich standard errors of \eqref{eq:sandwich}; the bootstrap supplies the interval estimates.

\paragraph{Conditional nature.}
The inference is conditional on the data-driven bases $\hat X_1,\hat X_2$, and hence subject to post-selection effects \citep{taylor2015}; relative to the two-block growth curve model the only change is that the covariate $Z=\hat X_2Y_2$ is itself estimated, so the inference conditions on both bases. \citet{satoh2026gcm} reported satisfactory calibration for the \emph{detection} of covariate-effect paths---the contrasts of primary interest---in that setting, whereas their magnitudes (like the overall level parameters, confounded with the estimated scale of the bases) should be read with caution. In our simulations with re-estimated bases (Section~\ref{sec:sim}), false-positive control was retained but became conservative rather than exactly calibrated.

\section{Data analysis}\label{sec5}

We illustrate the method on four two-block data sets, summarized in Table~\ref{tab:catalog}. Three serve as archetypes, chosen because NMF-RRR is useful for a different reason in each: a permutation structure that every method recovers (Doubs), a weak cross-structure under $p>n$ (nutrimouse), and a pronounced, biologically interpretable cross-structure in an originally $p\gg n$ study screened to $30$ variables per block (FRANZOSA gut microbiome--metabolome); a fourth example, the Wine data, illustrates the \emph{classification} special case, in which a one-hot class label makes the response basis $X_1$ the identity and the method reduces to a tested co-clustering of the covariates against the classes. The estimated correspondence matrices $\Theta$ for all four are shown together in Figure~\ref{fig:theta}. Each variable is mapped to $[0,1]$ by a per-variable min--max transform, so that sign-free covariates become non-negative; metabolite intensities are $\log$-transformed beforehand. After these transforms the per-variable values are bounded and roughly symmetric, so the Gaussian (squared-error) loss of \eqref{eq:objective} is a reasonable working model for the data analysed here, the heavier skew and the zeros of the raw counts having been absorbed by the $\log$ and the rescaling (we return to this in Section~\ref{sec6}). The fit is reported by the column-centered $R^2=1-\|Y_1-\hat Y_1\|_F^2/\|Y_1-\bar Y_1\|_F^2$, computed identically for every method compared, together with the mean absolute error (MAE) on the $[0,1]$ scale (Table~\ref{tab:compare}); the two give the same ordering. (In the comparison, RRR carries an intercept while the non-negative NMF-RRR does not, so the reported fit gap is conservative for NMF-RRR.) The two ranks are chosen by element-wise cross-validation (Section~\ref{sec:rank}), and the paths $\Theta$ are tested by the inference of Section~\ref{sec:inference}; in the tables, $p$-values below $0.001$ are reported as ``$<0.001$'' (the exact tiny values overstating the precision available), and estimates are starred for significance at a glance ($^{*}p<0.05$, $^{**}p<0.01$, $^{***}p<0.001$). Since each fitted $\Theta$ entails $QR$ simultaneous path tests, we also checked the conclusions against a within-table Bonferroni correction: all substantive paths in Doubs, FRANZOSA and Wine remain significant, while the single marginal nutrimouse path (Resp1$\leftarrow$Cov1, $p=0.041$) is treated as exploratory. We compare with reduced-rank regression at the matched rank $r=\min(Q,R)$, with the unsupervised tri-NMF of the covariate--response association $Y_1Y_2^{\top}$, and with classical canonical correlation analysis. We do \emph{not} include the closest non-negative competitor, nn-PRRR \citep{fitzgerald2022}, among these empirical comparators: as a Poisson bi-factorization $M=UV^{\top}$ that ties both blocks to a single shared rank (Table~\ref{tab:positioning}), it clusters both sides only jointly, at $Q=R$ in one-to-one correspondence, and produces no separately tested block-correspondence matrix $\Theta$---so it shares no comparable output with NMF-RRR, and a Poisson-versus-Gaussian in-sample fit comparison would not isolate the effect of this structural difference. The contrast with nn-PRRR is therefore structural, as set out in Section~\ref{subsec:relations} and Table~\ref{tab:positioning}, rather than a matter of fit. The inference of Section~\ref{sec:inference} estimates the entries of $\Theta$, but its calibration---examined by the simulation of Section~\ref{sec:sim}---supports the \emph{presence} of a path more than its magnitude; we therefore read the significant $\Theta$ entries primarily as tested present/absent links and interpret their sizes only qualitatively.

\begin{table}[h]
\caption{The four analysis examples. $P_1$ is the number of response variables (block $Y_1$; for Wine the one-hot class label), $P_2$ the number of covariate variables (block $Y_2$), $N$ the sample size; $(Q,R)$ are the response/covariate ranks chosen by element-wise cross-validation (for Wine, $Q$ equals the number of classes and only $R$ is selected by cross-validation); $R^2$ is the in-sample, column-centered fit at $\mathrm{rank}=\min(Q,R)$. ``$\Theta$ structure'' classifies the estimated parameter matrix (Figure~\ref{fig:theta}) as \emph{permutation} (a near one-to-one correspondence, no significant off-diagonal), \emph{weak cross} (one significant cross-path), or \emph{cross} (each response group significantly associated with two or more covariate groups).}\label{tab:catalog}
\centering
\footnotesize
\begin{tabular}{lllrcrl}
\toprule
Data set & Response $Y_1$ ($P_1$) & Covariate $Y_2$ ($P_2$) & $N$ & $(Q,R)$ & $R^2$ & $\Theta$ structure \\
\midrule
Doubs (ecology)            & fish species (27)  & environment (11)  & 30  & (2,2) & 0.44 & permutation \\
Nutrimouse                 & fatty acids (21)   & gene expr.\ (120) & 40  & (2,3) & 0.15 & weak cross \\
FRANZOSA (IBD)             & metabolome (30)    & microbiome (30)   & 220 & (2,4) & 0.12 & cross \\
Wine (chemistry)           & cultivar (3)       & chemistry (13)    & 178 & (3,3) & 0.38 & permutation \\
\bottomrule
\end{tabular}
\end{table}

\begin{figure}[h]
\centering
\includegraphics[width=0.8\textwidth]{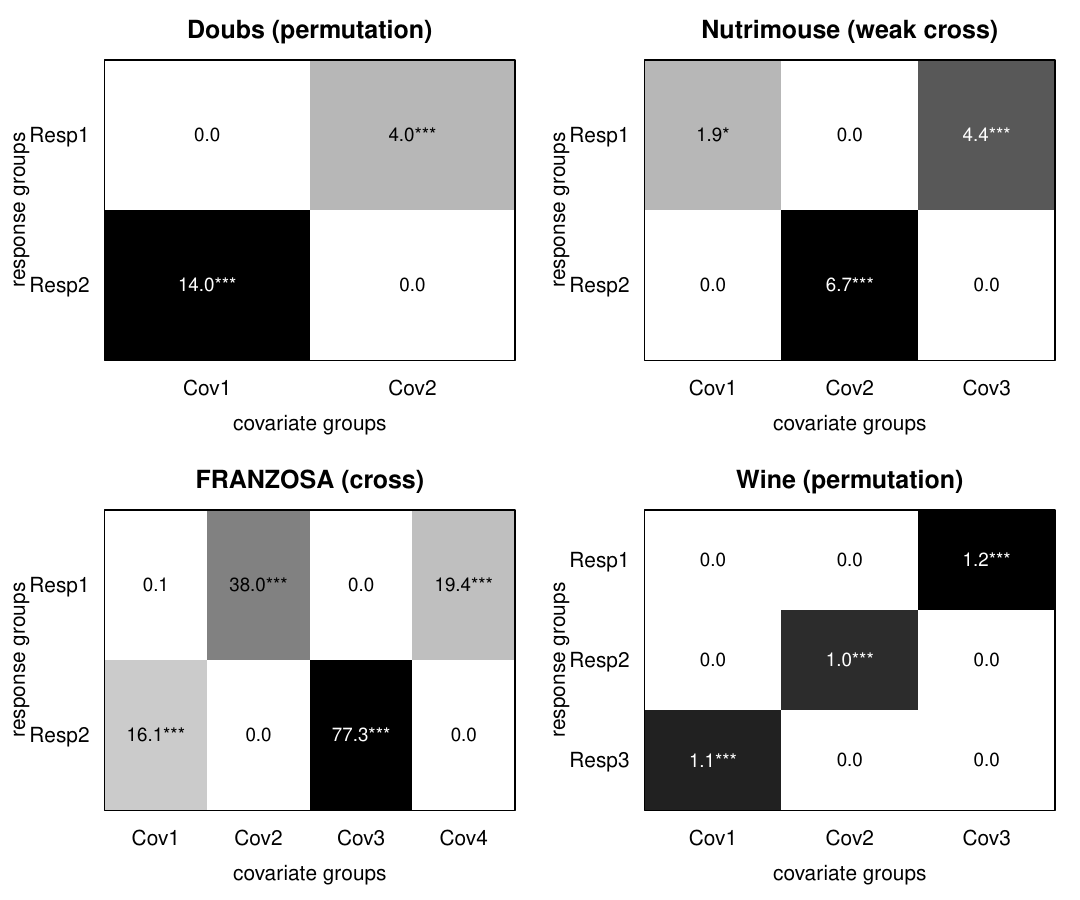}
\caption{Estimated parameter matrix $\Theta$ (the block-correspondence/co-cluster network) for the four examples: rows are response groups, columns covariate groups, cells shaded by magnitude and annotated with the estimate and significance ($^{*}p<0.05$, $^{**}p<0.01$, $^{***}p<0.001$; non-significant cells shown for transparency). Doubs and Wine are near-permutations (one covariate group per response group); FRANZOSA shows a pronounced cross-structure in which each response module is associated with two covariate groups, while nutrimouse shows a weaker cross-structure---one strong cross-path plus a marginal, exploratory one.}\label{fig:theta}
\end{figure}

\subsection{Doubs: fish versus environment}\label{subsec:doubs}

The Doubs data \citep{verneaux1973} record $P_1=27$ fish species and $P_2=11$ environmental variables at $N=30$ sites along a French river; they are a standard illustration of canonical (correspondence) analysis. Element-wise cross-validation selects $Q=R=2$.

The response basis splits the fish into a \textbf{cold-water upstream guild} (Resp1: \emph{Neba}, \emph{Phph}, \emph{Satr}=brown trout, \emph{Cogo}, \emph{Thth}=grayling) and a \textbf{warm-water downstream guild} (Resp2: \emph{Ruru}=roach, \emph{Gogo}, \emph{Baba}=barbel, \emph{Alal}), while the covariate basis splits the environment into an \textbf{oxic upstream gradient} (Cov2: dissolved oxygen and altitude, the two variables with non-negligible loadings) and a \textbf{downstream distance/flow gradient} (Cov1: distance from source $0.567$ and flow $0.433$; nitrate, BOD and the other nutrients load essentially zero, so we do not read Cov1 as a nutrient axis). Table~\ref{tab:doubs} shows that $\Theta$ is a near-permutation: the upstream guild is driven by the oxic gradient and the downstream guild by the distance/flow gradient (both $p<0.001$), the two off-diagonal paths being exactly zero. This reproduces Huet's classical longitudinal zonation, with $R^2=0.435$.

\begin{table}[h]
\caption{Doubs ($Q{=}R{=}2$): inference for $\Theta$ (one-sided boundary test).}\label{tab:doubs}
\centering
\begin{tabular}{llrrrr}
\toprule
Response group & Covariate gradient & Estimate & SE & $z$ & $p$ \\
\midrule
Resp1 (upstream guild)   & Cov2 (oxic)     & $3.97^{***}$  & 0.51 & 7.86 & $<0.001$ \\
Resp2 (downstream guild) & Cov1 (distance/flow) & $14.05^{***}$ & 1.84 & 7.65 & $<0.001$ \\
Resp1 (upstream guild)   & Cov1 (distance/flow) & 0.00  & 0.60 & 0.00 & $0.50$ \\
Resp2 (downstream guild) & Cov2 (oxic)     & 0.00  & 1.00 & 0.00 & $0.50$ \\
\bottomrule
\end{tabular}
\end{table}

The comparison is instructive. Reduced-rank regression fits better ($R^2=0.66$) but its coefficient matrix is signed (about half its entries negative) and yields no clusters. Yet the two methods largely share the same low-rank subspace: their fitted values correlate at $0.73$ and the principal-angle cosines between their response subspaces are $0.99$ and $0.87$, so they differ mainly in the basis of that subspace---signed singular directions versus non-negative parts. The unsupervised tri-NMF of $Y_1Y_2^{\top}$ recovers \emph{exactly} the same guilds and gradients (adjusted Rand index $1.00$ on both sides): the longitudinal gradient is so dominant that the supervised and unsupervised co-clusterings coincide, externally validating the NMF-RRR solution---yet tri-NMF cannot predict the community of a new site. Classical CCA, a factor (not clustering) method, is ill-posed at $N=30$ (canonical correlations $1,1,1,\dots$). Here, then, the value of the proposed method over the simplest baselines is the explicit, testable path structure $\Theta$ rather than the clustering itself.

\subsection{Nutrimouse: gene expression versus fatty acids}\label{subsec:nutri}

The nutrimouse data \citep{martin2007} record hepatic expression of $P_2=120$ genes and concentrations of $P_1=21$ fatty acids in $N=40$ mice---a canonical correlation benchmark with $p>n$. Element-wise cross-validation selects $Q=2$ response and $R=3$ covariate factors.

The fatty acids split into a \textbf{mono-unsaturated group} (Resp1: C16:1, C18:1, C20:1, C14:0) and a \textbf{saturated/long-chain PUFA group} (Resp2: C16:0, C18:0, arachidonic C20:4n-6, DHA C22:6n-3), and the genes into three programs, one of which (Cov3: SR-BI, FAT, Ntcp, S14) is a lipid-transport and metabolism program. The path structure (Table~\ref{tab:nutri}) is now a \emph{non-permutation}: besides the strong Resp2$\leftarrow$Cov2 path, the mono-unsaturated group Resp1 draws on \emph{two} gene programs---strongly on the lipid-handling Cov3 ($p<0.001$) and, more weakly, on Cov1 ($p=0.041$, marginal and not surviving the within-table Bonferroni correction)---a weak, exploratory cross-path that a permutation $\Theta$ could not express ($R^2=0.155$).

\begin{table}[h]
\caption{Nutrimouse ($Q{=}2,R{=}3$): inference for $\Theta$ (one-sided boundary test); non-significant cells omitted.}\label{tab:nutri}
\centering
\begin{tabular}{llrrrr}
\toprule
Response group & Gene program & Estimate & SE & $z$ & $p$ \\
\midrule
Resp2 (sat./PUFA) & Cov2                & $6.75^{***}$ & 0.51 & 13.2 & $<0.001$ \\
Resp1 (MUFA)      & Cov3 (lipid transport/metab.) & $4.39^{***}$ & 0.79 & 5.57 & $<0.001$ \\
Resp1 (MUFA)      & Cov1                & $1.94^{*}$ & 1.12 & 1.74 & $0.041$ \\
\bottomrule
\end{tabular}
\end{table}

Unlike Doubs, the methods now \emph{disagree}, and the regime $p=120>n=40$ is itself informative. Reduced-rank regression attains a higher in-sample fit ($R^2=0.62$, signed), but at $p>n$ this in-sample edge is largely what an unconstrained model with more parameters than observations attains: the regression is under-determined, so the gain need not reflect genuine shared structure or better generalization. (At $p>n$ the RRR normal equations are singular, and we solve them with a minimal ridge---$10^{-3}$ times the mean diagonal of the covariate Gram matrix---purely to make the system solvable; this is numerical stabilization, not a tuned regularizer, and a fully fair predictive comparison would benchmark against a \emph{tuned} ridge or sparse RRR, which we do not pursue here.) Classical CCA, meanwhile, is again ill-posed. The non-negative, normalized, low-rank parameterization of NMF-RRR, by contrast, regularizes the problem through structural constraints---non-negativity and the column/row normalization---rather than through a smaller rank (at the matched rank $r=\min(Q,R)$ the two carry comparable degrees of freedom), so the method remains well-behaved at $p>n$, just as ordinary NMF is routinely fitted to wide matrices. The unsupervised tri-NMF clusters agree only partially with the NMF-RRR ones (adjusted Rand index $0.24$ on the response side and $0.28$ on the covariate side): the NMF-RRR clusters arise from the supervised, low-rank fit and are only partly reproduced by raw association, so the sandwich-based significance of the cross-paths is what licenses their interpretation.

\subsection{FRANZOSA: gut microbiome versus metabolome}\label{subsec:franzosa}

The FRANZOSA inflammatory-bowel-disease study \citep{franzosa2019} measured, in $N=220$ subjects (88 Crohn's, 76 ulcerative colitis, 56 control), a gut microbiome ($11{,}720$ microbial features) and a metabolome ($8{,}848$ LC--MS features)---a strongly $p\gg n$ two-block problem on which classical unregularized CCA and RRR are ill-posed without screening or regularization. For interpretability we keep the $30$ most variable high-prevalence genera and metabolites; microbial relative abundances and $\log$-metabolite intensities are then min--max scaled. Element-wise cross-validation over $Q,R\in\{1,\dots,5\}$ favours $Q\le R$ throughout and never prefers $Q>R$; its global minimum lies at the largest ranks ($Q{=}4,R{=}5$, $\sigma=0.328$), but the profile is essentially flat once a single response factor is present---at $Q{=}2$, $\sigma$ moves only within $0.342$--$0.344$ as $R$ ranges over $2,3,4,5$ (against $0.362$ at $R{=}1$), consistent with the fit being capped by $\mathrm{rank}(M)=Q$ (Section~\ref{subsec:rankguide}). We therefore keep the parsimonious $Q=2$, which holds the two response modules identifiable, and take $R=4$ to resolve the covariate side and expose the cross-structure rather than to lower $\sigma$; like our other modelling choices, this rank is one the subsequent conditional inference conditions on.

Only three of the $30$ retained metabolites carry a compound annotation---the rest are unannotated LC--MS features---so we name each module by its annotated, high-loading members rather than by a metabolite class: Resp1 is the \textbf{urobilin module} (its two annotated features are both urobilin) and Resp2 the \textbf{chenodeoxycholate module} (its top-loading annotated feature is chenodeoxycholate, a bile acid). The four microbial groups, named by their highest-loading genera, are two commensal fibre-fermenter groups (Cov1: \emph{Bifidobacterium}, \emph{Bacteroides}, \emph{Prevotella}; Cov2: \emph{Alistipes}, \emph{Faecalibacterium}), a mixed group led by \emph{Blautia} and also carrying \emph{Coproplasma}, \emph{Enterococcus} and the Proteobacterium \emph{Escherichia} (Cov3), and a \emph{Collinsella}/\emph{Ruminococcus} group (Cov4). The path structure (Table~\ref{tab:franzosa}) is a pronounced \emph{non-permutation}: \emph{each} metabolite module is associated with \emph{two} microbial groups---Resp2 with Cov3 and Cov1, and Resp1 with Cov4 and Cov2---all highly significant. This cross-structure is the statistical finding; the biological reading below attaches to the modules' annotated features and is offered as a hypothesis conditional on the selected factorization, not as a characterization of the modules as chemical classes. Its top-loading feature chenodeoxycholate makes Resp2 a natural place to look for bile-acid biology: microbial transformation of bile acids---bile-salt-hydrolase deconjugation in particular---is a function \emph{distributed} across many genera, including \emph{Bacteroides}, \emph{Bifidobacterium}, \emph{Blautia} and \emph{Enterococcus} \citep{ridlon2014,foley2019}, so a module carrying chenodeoxycholate being jointly associated with a commensal group (Cov1) and a mixed group that includes \emph{Enterococcus} and the Proteobacterium \emph{Escherichia} (Cov3, whose bloom is a hallmark of IBD dysbiosis \citep{shin2015}) is biologically plausible, and chenodeoxycholate is itself among the bile acids that \citet{franzosa2019} report elevated in IBD. We read these as co-occurrence associations, not causal claims, and---since its annotated content is only urobilin---regard Resp1 as more exploratory still. Because $N=220$ is comparatively large, these conditional tests also rest on firmer ground than in the two small benchmarks.

The screening threshold deserves a word. Some filtering is unavoidable---the raw blocks hold $11{,}720$ microbial features and $8{,}848$ metabolite features---and ours is \emph{unsupervised} with respect to the covariate--response relation: prevalence and variance are computed within each block separately and never use the association between the blocks, so this is not selection of variables by their relation to the response. Keeping only $30$ of them is nonetheless an aggressive cut, so we repeated the entire analysis at $50$, $100$, $200$ and $400$ variables per block, holding the ranks at $(Q,R)=(2,4)$ so that the two changes are not confounded. At the fixed ranks the coarse two-path-per-response pattern persisted: at every threshold each metabolite module still drew significantly on exactly two microbial groups, while the fit improved monotonically ($R^2=0.115$, $0.175$, $0.225$, $0.237$, $0.259$). What does change is the \emph{composition} of the covariate groups. As more variables enter, the fine separation among the four covariate groups gives way to coarser ones, until at $400$ variables eight of the eleven genera named above share a single group. The coarse two-path pattern thus persisted across thresholds, whereas the specific group memberships remained threshold-dependent and, like everything on the covariate side, conditional on the $K=30$ screening---an instance, on real data, of the covariate-side non-identifiability that Remark~\ref{rem:ident} flags for $Q<R$. A comparatively large $N=220$ limits sampling variation but does not by itself remove this uncertainty of the factorization. Sample-wise resampling was less stable, particularly on the covariate side; hence the exact microbial memberships and individual paths remain exploratory.

\begin{table}[h]
\caption{FRANZOSA ($Q{=}2,R{=}4$): inference for $\Theta$ (significant paths). Each metabolite module is associated with two microbial groups.}\label{tab:franzosa}
\centering
\begin{tabular}{llrrr}
\toprule
Metabolite module & Microbial group & Estimate & $z$ & $p$ \\
\midrule
Resp2 (chenodeoxycholate)            & Cov3 (\emph{Blautia}/\emph{Coproplasma})   & $77^{***}$ & 20.4 & $<0.001$ \\
Resp1 (urobilin)     & Cov2 (\emph{Alistipes}/\emph{Faecalibacterium}) & $38^{***}$ & 6.4  & $<0.001$ \\
Resp1 (urobilin)     & Cov4 (\emph{Collinsella}/\emph{Ruminococcus})  & $19^{***}$ & 7.3  & $<0.001$ \\
Resp2 (chenodeoxycholate)            & Cov1 (\emph{Bifidobacterium}/\emph{Bacteroides}) & $16^{***}$ & 5.3  & $<0.001$ \\
\bottomrule
\end{tabular}
\end{table}

As in nutrimouse, reduced-rank regression fits better ($R^2=0.30$ versus $0.12$); on the full $11{,}720\times8{,}848$ data CCA and RRR are ill-posed. The unsupervised tri-NMF here reaches substantial agreement with NMF-RRR on the response side (adjusted Rand index $0.74$) and only moderate agreement on the covariate side ($0.45$)---note that these measure agreement between two methods, not accuracy against a known truth. The two metabolite modules are therefore largely visible in the raw association as well; what the supervised formulation adds is the covariate-side partition, which is defined by the regression rather than by the association and is where the two disagree, together with the tested paths of $\Theta$ and the ability to predict a new subject. The relation to RRR is again the one anticipated by Proposition~\ref{prop:rrr}: here the leading principal-angle cosine between the RRR and NMF-RRR response subspaces is $0.97$ while the second is only $0.48$. Indeed, across the three multivariate-response examples the first cosine lies in $[0.90,0.99]$ and the second in $[0.11,0.87]$ (nutrimouse supplies the low ends, $0.90$ and $0.11$)---the two methods agree on the dominant response direction and diverge on the secondary ones, as the proposition predicts when the RRR fit is not itself sign-compatible with a non-negative tri-factorization. This data set is where the cross-structure the method targets is most clearly and most significantly realized.

\begin{table}[h]
\caption{Comparison across the three multivariate-response examples; Wine, whose response is a one-hot class label, is discussed separately in Section~\ref{subsec:wine}. NMF-RRR: proposed (non-negative, supervised). $R^2$ is in-sample, column-centered, at $\mathrm{rank}=\min(Q,R)$; MAE is the mean absolute error on the $[0,1]$ scale. Both metrics use identical definitions for every method, and both rank RRR above NMF-RRR in fit (Proposition~\ref{prop:rrr}). \textsuperscript{a}The FRANZOSA study is originally $p\gg n$ ($11{,}720$ microbial features, $8{,}848$ metabolite features); the analysis reported here screens each block to its $30$ most variable high-prevalence variables, so the fitted problem has $30<220$. Nutrimouse is the one example fitted with more variables than samples.}\label{tab:compare}
\centering
\begin{tabular}{lccc}
\toprule
 & Doubs & Nutrimouse & FRANZOSA \\
 & ($Q{=}R{=}2$) & ($Q{=}2,R{=}3$) & ($Q{=}2,R{=}4$) \\
\midrule
NMF-RRR $R^2$ (non-negative, co-clustering) & 0.44 & 0.15 & 0.12 \\
RRR $R^2$ (signed, no clusters)          & 0.66 & 0.62 & 0.30 \\
NMF-RRR MAE                              & 0.186 & 0.183 & 0.290 \\
RRR MAE                                  & 0.137 & 0.120 & 0.250 \\
tri-NMF vs NMF-RRR, ARI (resp.\ / cov.)          & 1.00 / 1.00 & 0.24 / 0.28 & 0.74 / 0.45 \\
$p$ vs $n$ ; structure                    & $11{<}30$; perm. & $120{>}40$; weak cross & $30{<}220$\textsuperscript{a}; cross \\
\bottomrule
\end{tabular}
\end{table}

Together the three archetypes delineate when the method is needed. When a single dominant gradient aligns both blocks (Doubs), every method agrees and the contribution of NMF-RRR is the testable path structure; when within-block and cross-block structure differ---weakly in nutrimouse and, most clearly, in the FRANZOSA microbiome--metabolome data, where several microbial groups are jointly associated with each metabolite module---the supervised, non-negative formulation and the unsupervised baseline part company: on the covariate side in FRANZOSA, where the response modules are largely shared, and on both sides in nutrimouse. The fourth example moves to a different setting, classification, where the response is a one-hot class label and the response basis therefore degenerates to the identity.

\subsection{Wine: chemical signatures of three cultivars (a classification example)}\label{subsec:wine}

The Wine data \citep{wine1992} record $P_2=13$ chemical measurements (alcohol, phenolics, flavanoids, colour intensity, proline, etc.) on $N=178$ wines from three cultivars grown in the same Italian region. Here the response is the cultivar \emph{label}, so $Y_1$ is the $3\times N$ one-hot indicator and $Y_2$ the chemical covariates. With a one-hot response and $Q=3$, the response basis collapses to the identity (each cultivar is its own group; we verified $X_1=I_3$), so the model specializes to a \emph{supervised co-clustering of the covariates against the classes}---case~(ii) of Section~\ref{sec1} made explicit: the response side is not clustered, and all of the content is in $X_2$ (the grouping of the chemical features) and $\Theta$ (which feature group characterises which cultivar). Element-wise cross-validation selects $R=3$ ($\sigma$ falls from $0.42$ at $R=2$ to $0.37$ at $R=3$ and is flat beyond).

The $13$ features split into three coherent chemical groups, which we name by the features that actually carry their loading: Cov3 (proline and flavanoids), Cov2 (hue and alcalinity of ash), and Cov1 (colour intensity and malic acid). The correspondence $\Theta$ is a clean permutation (Table~\ref{tab:wine}): each cultivar is characterised by exactly one chemical group---Cultivar~1 by the proline/flavanoid group, Cultivar~2 by the hue group, and Cultivar~3 by the colour/malic-acid group---all three diagonal paths highly significant ($p<0.001$) and the six off-diagonal paths exactly zero. The grouping is chemically sensible: Cultivar~1 is the flavanoid- and proline-rich type, whereas Cultivar~3 is set apart by high colour intensity and malic acid. Reduced-rank regression attains a much higher fit ($R^2=0.85$ versus $0.38$; MAE $0.14$ versus $0.34$), as Proposition~\ref{prop:rrr} predicts, but returns signed loadings rather than the parts-based, named feature groups; here the contribution of NMF-RRR is exactly this interpretable, tested feature co-clustering---a permutation $\Theta$ assigning each cultivar its chemical signature. Read as a classifier in the manner of NMF-LAB \citep{satoh2026lab}---with $X_1=I$ the output scores $B=\Theta X_2 Y_2$ become class-membership proportions once each column is normalized to sum one, their calibration as probabilities not being examined here---the hard (argmax) assignment recovers the cultivar for $87.6\%$ of the wines in-sample ($156/178$).

\begin{table}[h]
\caption{Wine ($Q{=}3,R{=}3$): inference for $\Theta$ (one-sided boundary test). Each cultivar is characterised by one chemical feature group; the six off-diagonal paths are exactly zero ($p=0.50$, omitted).}\label{tab:wine}
\centering
\begin{tabular}{llrrr}
\toprule
Cultivar & Chemical feature group & Estimate & $z$ & $p$ \\
\midrule
Cultivar 1 & Cov3 (proline/flavanoids)          & $1.25^{***}$ & 16.0 & $<0.001$ \\
Cultivar 3 & Cov1 (colour intensity/malic acid) & $1.08^{***}$ & 8.9  & $<0.001$ \\
Cultivar 2 & Cov2 (hue/alcalinity of ash)       & $1.04^{***}$ & 7.3  & $<0.001$ \\
\bottomrule
\end{tabular}
\end{table}

This classification example also delimits the method: when the response is a one-hot label and $Q$ equals the number of classes, $X_1$ is the identity and NMF-RRR is no longer a two-block co-clustering but a one-sided, supervised clustering of the features---a useful special case, and a reminder that genuine two-block co-clustering requires either a multivariate response (as in Sections~\ref{subsec:doubs}--\ref{subsec:franzosa}) or $Q$ smaller than the number of classes.

\section{Simulation study}\label{sec:sim}

We run six experiments in two groups. Four concern the inference of Section~\ref{sec:inference}. (i) With the bases held \emph{fixed}, is the test for $\Theta$ well calibrated? (ii) When both bases are \emph{re-estimated} from each data set---the realistic, selective case---do the existence test and the path magnitudes remain reliable? (iii) When the two ranks are \emph{also} chosen from each data set, so that the whole procedure of Sections~\ref{sec3}--\ref{sec:inference} is repeated, does the existence test still control its size? (iv) When the errors are correlated \emph{across responses}, so that the working model \eqref{eq:workmodel} is misspecified, do the sandwich standard errors of \eqref{eq:sandwich} deliver the robustness they are meant to buy? Two further experiments concern the method rather than the test: (v) can it tell a genuine \emph{cross-structure} from a permutation, and (vi) does tri-factorizing the \emph{coefficient} recover the true co-clustering better than factorizing the raw association? The upshot is a clean separation. The size of the existence (significance) test depends on the regime: nominal with the bases fixed, conservative once the bases and---conditional on the true ranks being recovered---the ranks too are re-estimated, and inflated but substantially controlled when responses are correlated, where the sandwich standard errors reduce a five-fold over-rejection to a residual $0.067$--$0.075$. The \emph{magnitudes} of the non-zero paths become anti-conservative as soon as the bases are re-estimated; cross-structure is a detectable feature rather than a dataset artifact; and the co-clustering itself is recovered as well as, and under cross-block covariate correlation more accurately than, by the unsupervised baseline.

The first two experiments are calibrated to the Doubs fit of Section~\ref{subsec:doubs} ($Q{=}R{=}2$). The estimated bases $\hat X_1,\hat X_2$ and parameter matrix are taken as ground truth; the two numerically near-zero entries of $\Theta$ (displayed as $0.00$ in Table~\ref{tab:doubs}) are set exactly to zero, so the design has one genuine path per response group (true values $14.05$ and $3.97$) and two exact boundary nulls. We draw $B=3000$ data sets from the working model $Y_1=\hat X_1\,\Theta\,Z+\mathcal E$ with $Z=\hat X_2Y_2$ and $\mathrm{vec}(\mathcal E)\sim N(\vec0,\hat\sigma^2 I)$, $\hat\sigma=0.243$ being the residual scale of the fit. \emph{Conditional on the bases}---the assumption of Section~\ref{sec:inference}---we estimate $\Theta$, form the model-based standard errors of \eqref{eq:thetadist}, and apply the one-sided boundary test. Because the data are generated from the working model \eqref{eq:workmodel} itself, the model-based and the sandwich standard errors of \eqref{eq:sandwich} target the same quantity here; what the simulation probes is therefore the calibration of the boundary test and the effect of re-estimating the bases, not the robustness that \eqref{eq:sandwich} buys when \eqref{eq:workmodel} is misspecified.

\begin{table}[h]
\caption{Simulation calibrated to the Doubs fit ($Q{=}R{=}2$, $B=3000$, $\hat\sigma=0.243$), conditional on the bases. Empirical coverage of nominal $95\%$ intervals; for the true-zero paths the last column is the size of the one-sided boundary test (target $0.05$), for the others the power.}\label{tab:sim}
\centering
\begin{tabular}{lrrrrr}
\toprule
Path & true $\theta$ & bias & cov.\ (2-sided) & cov.\ (1-sided) & reject ($z>z_{.05}$) \\
\midrule
Resp1$\leftarrow$Cov1 & 0      & $-0.005$& 0.944 & 0.949 & 0.051 \quad(size) \\
Resp2$\leftarrow$Cov1 & 14.05  & 0.003   & 0.957 & 0.951 & 1.000 \quad(power) \\
Resp1$\leftarrow$Cov2 & 3.97   & 0.003   & 0.950 & 0.949 & 1.000 \quad(power) \\
Resp2$\leftarrow$Cov2 & 0      & $-0.008$& 0.951 & 0.945 & 0.055 \quad(size) \\
\bottomrule
\end{tabular}
\end{table}

With the bases fixed (Table~\ref{tab:sim}) the inference is well calibrated: the bias is negligible, both the two-sided and one-sided $95\%$ intervals attain nominal coverage, the boundary test controls its size ($0.051$--$0.055$ against $0.05$), and the power at the two genuine paths is $1$.

In practice the bases are estimated from the same data, which makes the inference selective \citep{taylor2015}. To probe this, we repeat the experiment but \emph{re-estimate both bases} from each simulated data set (clipping negative entries to zero so that $Y_1\ge0$, as the non-negative fit requires) and align the estimated factors to the truth by correlation before applying the inference. Table~\ref{tab:sim2} reports the result.

\begin{table}[h]
\caption{As Table~\ref{tab:sim} but with \emph{both bases re-estimated} from each data set and aligned to the truth ($B=1000$).}\label{tab:sim2}
\centering
\begin{tabular}{lrrrrr}
\toprule
Path & true $\theta$ & bias & cov.\ (2-sided) & cov.\ (1-sided) & reject ($z>z_{.05}$) \\
\midrule
Resp1$\leftarrow$Cov1 & 0      & 0.000   & 1.000 & 1.000 & 0.000 \quad(size) \\
Resp2$\leftarrow$Cov1 & 14.05  & $-1.773$& 0.000 & 0.999 & 1.000 \quad(power) \\
Resp1$\leftarrow$Cov2 & 3.97   & 2.225   & 0.000 & 0.000 & 1.000 \quad(power) \\
Resp2$\leftarrow$Cov2 & 0      & 0.000   & 1.000 & 1.000 & 0.000 \quad(size) \\
\bottomrule
\end{tabular}
\end{table}

A clear dichotomy emerges. The \emph{existence test} stays safe: at the two true-zero paths the boundary test does not over-reject (empirical size $0.000$, i.e.\ conservative in these settings), so a significant path is unlikely to be spurious in the settings examined rather than guaranteed valid in general. The \emph{magnitudes} are not: the two genuine paths are biased by $+2.2$ and $-1.8$, and their nominal two-sided $95\%$ intervals essentially never cover the truth. A one-sided lower interval---natural for a non-negative parameter---does not rescue this; it works for the attenuated path but fails for the over-estimated one, whose lower bound itself exceeds the truth, because the failure is selection bias in the point estimate, not the sidedness of the interval.

The observed pattern is consistent with double-dipping \citep{taylor2015} acting under the fixed total of Proposition~\ref{prop:budget}: the bases are fitted to the same data on which $\Theta$ is then tested, and where the regression pins $\sum_{q,r}\theta_{qr}$ while leaving its allocation across the paths loose, re-estimating the bases \emph{redistributes} that total rather than perturbing the entries independently. Table~\ref{tab:sim2} shows the redistribution directly: the two genuine paths move in opposite directions, by $+2.225$ and $-1.773$, so that the total rises by only $0.452$---some $2.5\%$ of $\sum_{q,r}\theta_{qr}=18.02$---while each individual path is off by four to five times that amount. A dedicated experiment shows that the accounting closes: taking the synthetic $Q{=}R{=}3$ design introduced below, adding a single weak cross path $\theta_{13}=\delta$ and re-running the whole procedure, the deficit at that path and the surplus spread over the three dominant paths differ by $0.02$--$0.03$ for every $\delta$ between $0.1$ and $4$, and that residual is itself the small upward drift of $\sum_{q,r}\theta_{qr}$. The compensation is thus exact up to the drift of the total, which is what \eqref{eq:budget} predicts. The same pattern appeared in the growth-curve study of \citet{satoh2026gcm}.

Two further experiments sharpen the picture. In an auxiliary simulation the upward bias increased with the fraction of latent Gaussian responses clipped at zero; that clipping experiment isolates rectification within the data-generating mechanism, and observed zeros after min--max scaling are not a direct diagnostic of the resulting bias. Together these findings suggest a two-stage mechanism: rectification creates upward pressure on the fitted total, while the normalization identity of Proposition~\ref{prop:budget} constrains how the resulting error is redistributed across the paths. The identity does not determine which paths gain and which lose mass. Neither a full-refit bootstrap (re-estimating the bases in each resample) nor sample splitting (estimating the bases on one half and testing $\Theta$ on the other) restored nominal coverage, so a valid selective-inference correction is left to future work (Section~\ref{sec6}). We therefore read the inference of Section~\ref{sec:inference} as a test of \emph{whether} a path is present---which the simulation supports---and treat the magnitudes of non-zero paths as conditional on the estimated bases.

Experiments (i) and (ii) hold the two ranks fixed at their true values, whereas in practice $(Q,R)$ is chosen from the same data (Section~\ref{sec:rank}), which adds a further layer of selection. The third experiment therefore repeats the \emph{entire} procedure in every replication: data are generated under the null, the two ranks are selected by element-wise cross-validation with the parsimonious one-standard-error rule, the model is re-fitted from twenty multistarts, the factors are aligned, and the sandwich standard errors of \eqref{eq:sandwich} feed the one-sided boundary test. We use two identifiable designs---the Doubs calibration above, and a synthetic block-separable design with $Q{=}R{=}3$, $P_1=15$, $P_2=30$, $N=60$, $\Theta=\mathrm{diag}(10,8,6)$ and $\hat\sigma=0.05$. Identifiability here rests on the conditions of Appendix~\ref{app:ident} and not on $Q=R$ alone: in both designs every response and every covariate group owns an anchor variable loading on it alone (the purest profile is $1.000$ on both sides in both cases), and $\Theta$ is nonsingular and well conditioned---$\det=-55.8$ with $2$-norm condition number $3.5$ for Doubs, $\det=480$ with condition number $1.7$ for the synthetic design. Attention is restricted to $Q=R$ because a per-path null $\theta_{qr}=0$ is not invariant to the choice of factorization when $Q<R$ (Section~\ref{subsec:rankguide}, Appendix~\ref{app:ident}), so the type-I error is well defined only in the identifiable case; sizes are reported conditional on the replication having recovered the true ranks.

\begin{table}[h]
\caption{Full-pipeline null simulation: in every replication the two ranks are re-selected by cross-validation and both bases re-estimated. ``Recovery'' is the proportion of replications selecting the true ranks, on which the remaining columns condition; ``size'' is the largest rejection rate over the true-zero paths (nominal $0.05$) and FWER the probability of rejecting any of them.}\label{tab:sim3}
\centering
\begin{tabular}{lrrrrrr}
\toprule
Design & $B$ & recovery & $n$ & max size & FWER & power \\
\midrule
Doubs, $Q{=}R{=}2$     & 5000 & 1.000 & 5000 & 0.0014 & 0.0014 & 1.000 \\
Synthetic, $Q{=}R{=}3$ & 3000 & 0.746 & 2238 & 0.0000 & 0.0000 & 1.000 \\
\bottomrule
\end{tabular}
\end{table}

Conditional on correct-rank recovery, rank selection did not induce over-rejection (Table~\ref{tab:sim3}): the empirical size never exceeds $0.0014$ against a nominal $0.05$, the family-wise error rate over the true-zero paths is at most $0.0014$ ($0.0002$ after a Bonferroni adjustment), and the power at every genuine path is $1$. Quadrupling the noise of the synthetic design ($\hat\sigma=0.2$) leaves this intact---size $\le0.002$, power $\ge0.998$, with the true ranks recovered in $67\%$ of replications---so the behaviour is not an artefact of an unusually clean design. Two qualifications are essential. First, these are \emph{conditional} rates: in the synthetic design about a quarter of the replications over-select $R$ and are excluded, so Table~\ref{tab:sim3} does not report the unconditional size of the whole procedure. Second, the conditioning is forced rather than chosen for convenience: as noted above, an unconditional per-path type-I error is not even well defined when $Q<R$. What the experiment establishes is thus narrower than it may appear: on the replications where the hypothesis is well posed, adding rank selection on top of basis re-estimation does not create over-rejection. That is consistent with the fixed total of Proposition~\ref{prop:budget}, under which a null path can be declared significant only by taking coefficient mass from the dominant paths.

All three experiments so far draw iid Gaussian errors, that is, exactly the working model \eqref{eq:workmodel}; none of them can therefore speak to the robustness that the sandwich standard errors of \eqref{eq:sandwich} are introduced to buy. The fourth experiment supplies that evidence. Keeping the Doubs calibration and holding the bases \emph{fixed}---so that nothing is confounded by basis re-estimation or rank selection---we draw errors correlated \emph{across responses} within an individual, $\mathcal E_{\cdot n}\sim N(\vec0,\Sigma)$ with $\Sigma_{ij}=\sigma^2\rho^{|i-j|}$ and $\rho\in\{0,0.5,0.8\}$, and compare three standard errors computed on identical data: the model-based $\hat{\mathcal I}^{-1}$ of \eqref{eq:thetadist}, the sandwich of \eqref{eq:sandwich}, and the wild bootstrap of Section~\ref{sec:inference} ($B=2000$ replications, $200$ multiplier draws).

\begin{table}[h]
\caption{Robustness to across-response correlation (Doubs calibration, bases fixed, $B=2000$). Errors within an individual follow $\Sigma_{ij}=\sigma^2\rho^{|i-j|}$; $\rho=0$ is the iid control, where \eqref{eq:workmodel} holds. ``Max size'' is the largest rejection rate over the two true-zero paths (nominal $0.05$); the power at both genuine paths is $1.000$ in every row and is omitted. ``Mean SE'' and ``MC SD'' are averaged over the four paths; the latter is the Monte Carlo standard deviation of $\hat\theta$ itself and so is a property of the design, not of the standard error being compared.}\label{tab:sim4}
\centering
\begin{tabular}{llrrrr}
\toprule
$\rho$ & Standard error & max size & FWER & mean SE & MC SD \\
\midrule
$0$   & model-based    & 0.0525 & 0.0980 & 0.362 & \multirow{3}{*}{0.362} \\
      & sandwich       & 0.0670 & 0.1255 & 0.343 & \\
      & wild bootstrap & 0.0925 & 0.1625 & 0.268 & \\
\midrule
$0.5$ & model-based    & 0.1495 & 0.2615 & 0.360 & \multirow{3}{*}{0.572} \\
      & sandwich       & 0.0700 & 0.1305 & 0.542 & \\
      & wild bootstrap & 0.0875 & 0.1570 & 0.423 & \\
\midrule
$0.8$ & model-based    & 0.2665 & 0.4085 & 0.357 & \multirow{3}{*}{0.830} \\
      & sandwich       & 0.0745 & 0.1330 & 0.792 & \\
      & wild bootstrap & 0.0910 & 0.1620 & 0.620 & \\
\bottomrule
\end{tabular}
\end{table}

Table~\ref{tab:sim4} exposes the mechanism as well as the outcome. At $\rho=0$, where \eqref{eq:workmodel} holds, the model-based test is exactly nominal ($0.0525$). As the correlation grows its size climbs to $0.15$ and then to $0.27$---over five times nominal---because its standard error cannot see the correlation at all: the model-based mean standard error is $0.362$, $0.360$, $0.357$ across the three settings, essentially constant. The sandwich standard error instead follows the true variability: averaged over the four paths it reads $0.343$, $0.542$, $0.792$ against a Monte Carlo standard deviation of $\hat\theta$ of $0.362$, $0.572$, $0.830$, so it reproduces the actual spread to within about five per cent while sitting consistently a little below it. That small downward bias is precisely why the size settles slightly \emph{above} nominal, at $0.067$--$0.075$, rather than at $0.05$; the power at both genuine paths nonetheless remains $1.000$. The wild bootstrap behaves similarly ($0.088$--$0.093$), as one expects since it multiplies each individual's entire score vector and so preserves the within-individual covariance. The protection is not free. At $\rho=0$ the model-based standard error, being correct there, gives the more accurate size ($0.0525$ against $0.0670$), and the bootstrap is the least accurate of the three ($0.0925$) while somewhat under-estimating the standard error. With only $N=30$ individuals available to estimate the $QR\times QR$ meat matrix, this finite-sample cost is expected; we judge it a fair price for the protection obtained at $\rho=0.5$ and $0.8$, and it is the reason the tables of Section~\ref{sec5} report sandwich rather than model-based standard errors.

\subsection*{Detecting cross-structure}\label{subsec:simcross}

A separate concern is whether the cross-structure of Section~\ref{subsec:nutri} is a real, detectable phenomenon or an artifact of one data set. We test this with a synthetic two-block design ($P_1=15$, $P_2=30$, $N=60$, $Q=2$, $R=3$, fixed non-negative bases, $B=2000$, $\hat\sigma=0.05$) under two ground truths: a \emph{permutation} $\Theta=\bigl[\begin{smallmatrix}3&0&0\\0&3&0\end{smallmatrix}\bigr]$ and a \emph{cross} $\Theta=\bigl[\begin{smallmatrix}3&0&2\\0&3&0\end{smallmatrix}\bigr]$ in which response group~1 is driven by covariate groups~1 \emph{and}~3. The one-sided test (conditional on the true bases) gives the rejection rates
\[
\text{permutation:}\ \begin{bmatrix}.85&.05&.05\\.05&.98&.05\end{bmatrix},\qquad
\text{cross:}\ \begin{bmatrix}.85&.05&\mathbf{.64}\\.05&.98&.05\end{bmatrix}.
\]
The cross path (true value $2$) is detected with power $0.64$ and, when absent, rejected at only the nominal $0.05$; the diagonal paths (true value $3$) have power $0.85$ and $0.98$ while every true-zero cell stays near $0.05$. The test thus tells a true cross-structure from a permutation: cross-structure is a detectable feature, not a dataset artifact. We test each $\theta_{qr}$ separately so as to localize \emph{which} covariate group forms the extra path (e.g.\ the chenodeoxycholate module associated with several bacterial groups in Section~\ref{subsec:franzosa}). The power of $0.64$ also reminds us that a non-significant path means ``not detected at this power,'' not evidence of absence.

In a more demanding full-pipeline variant, in which the bases \emph{and} the ranks are re-estimated in every replication, the $\delta$ experiment described above puts a scale on that reminder: with dominant paths of $10$, $8$ and $6$, the weak cross path was declared significant with probability $0.02$ at $\delta=0.5$, $0.24$ at $\delta=1$ and $0.94$ at $\delta=2$, so a cross path much below a fifth of the dominant path in its own row is unlikely to be found. Its magnitude stays strongly shrunk even where detection is reliable---at $\delta=4$ the power is $0.998$ but the mean estimate is $2.6$---which is again consistent with Proposition~\ref{prop:budget}: under a fixed total, the weak path can only be paid for out of the mass held by the dominant ones. This one-fifth heuristic is design-specific rather than universal; power also depends on the sample size, the noise level and the separation of the bases, and the larger $N=220$ of Section~\ref{subsec:franzosa} permits substantially weaker relative paths to be detected.

\subsection*{Recovering the true co-clustering}\label{subsec:simrecovery}

The experiments so far evaluate the \emph{inference}. The object the method is built around, however, is the co-clustering itself, and the case for tri-factorizing the coefficient $M$ rather than the raw association $Y_1Y_2^{\top}$ has so far been algebraic (Section~\ref{subsec:relations}) and indirect---the divergence from unsupervised tri-NMF on the real data of Section~\ref{sec5}, where the truth is unknown. The final experiment tests it where the truth is known.

We reuse the identifiable block-separable design of experiment~(iii) ($Q{=}R{=}3$, $P_1=15$, $P_2=30$, $N=60$, $\Theta=\mathrm{diag}(10,8,6)$) and generate non-negative covariates as $y^{(2)}_{j,n}=0.2+f_{r(j),n}+\lambda\,\mathbf 1\{r(j)\in\{1,2\}\}\,g_n+0.4\,u_{j,n}$, where $f_r$ is a factor common to covariate block $r$, $g$ a factor shared by blocks $1$ and $2$, and $u$ idiosyncratic noise taken large enough that $Y_2$ has full row rank---so that $M$ itself, and not merely $MY_2$, is identified (Appendix~\ref{app:ident}). The parameter $\lambda$ controls how far the association departs from the coefficient. In expectation $Y_1Y_2^{\top}\propto M(Y_2Y_2^{\top})$, so when the covariate Gram matrix is approximately block diagonal \emph{with respect to the true partition}, factorizing the association preserves the co-clustering; cross-block covariance introduces off-block mixing and degrades recovery. We take $\lambda=0$ (mean cross-block covariate correlation $0.03$, within-block $0.86$) and $\lambda=3$ (cross-block $0.27$, within-block $0.94$), with $B=400$ replications each, and compare NMF-RRR against the unsupervised tri-NMF of $Y_1Y_2^{\top}$ used as the baseline in Section~\ref{sec5}, computed with the update rules of \citet{ding2006}, with column normalization to fix the scale they leave free, from a $K$-means start and taking the best of ten restarts.

\begin{table}[h]
\caption{Recovery of the true co-clustering ($Q{=}R{=}3$, $B=400$). The adjusted Rand index is computed against the true blocks; ARI and the two error measures are evaluated at the true ranks $(3,3)$, and only the last row reports the frequency with which cross-validation recovered them. The two error measures refer to NMF-RRR, the only method here that estimates a regression coefficient. $\lambda$ controls the cross-block correlation of the covariates and hence how far the association $Y_1Y_2^{\top}$ departs from the coefficient $M$.}\label{tab:sim5}
\centering
\begin{tabular}{lrr}
\toprule
 & $\lambda=0$ (cross-block corr.\ $0.03$) & $\lambda=3$ (corr.\ $0.27$) \\
\midrule
ARI responses, NMF-RRR            & 0.998 & 1.000 \\
ARI responses, tri-NMF            & 1.000 & 0.823 \\
ARI covariates, NMF-RRR           & 0.995 & 1.000 \\
ARI covariates, tri-NMF           & 1.000 & 0.859 \\
$\|\hat M-M\|_F/\|M\|_F$          & 0.167 & 0.161 \\
$\|\hat MY_2-MY_2\|_F/\|MY_2\|_F$ & 0.010 & 0.003 \\
rank $(3,3)$ recovered            & 0.945 & 0.980 \\
\bottomrule
\end{tabular}
\end{table}

When the covariate Gram matrix is nearly block diagonal with respect to the true partition ($\lambda=0$) the two methods are indistinguishable: both recover the design, the baseline in fact reaching $1.000$ on both sides and agreeing with NMF-RRR in $99\%$ of replications. Nothing is gained by supervising the factorization here---the situation of the Doubs analysis, where the two also agreed exactly. Once the covariate blocks are cross-correlated ($\lambda=3$) the baseline degrades while NMF-RRR does not: the mean adjusted Rand index falls from $1.00$ to $0.823$ on the response side and $0.859$ on the covariate side. The gap is small in the mean but very stable---mean difference $0.177$ and $0.141$ with Monte Carlo standard errors $0.011$ and $0.007$---and it is one-sided: NMF-RRR was never worse than the baseline in any of the $400$ replications, and strictly better in $44\%$ (responses) and $69\%$ (covariates), the remainder being ties. This is the controlled counterpart of the divergence observed on nutrimouse and FRANZOSA, and it locates the advantage precisely: the association matrix displays the coefficient seen through $Y_2Y_2^{\top}$, so supervising the factorization buys nothing when the covariate blocks are uncorrelated and buys accuracy, in this design, when they are not. The design deliberately isolates the distortion contributed by the covariate Gram matrix, and it is favourable to NMF-RRR in other respects---the data are generated from its own model, at the true ranks and at high signal---so it does not establish a universal superiority over tri-NMF. The two error measures also illustrate the identifiability hierarchy of Appendix~\ref{app:ident}: even with $Y_2$ of full row rank, the fitted mean $MY_2$ is recovered an order of magnitude more accurately ($0.003$--$0.010$) than the coefficient $M$ itself ($0.16$).

\section{Discussion and conclusion}\label{sec6}

We introduced a \emph{co-clustering of the response and covariate variables}, obtained by tri-factorizing the non-negative regression coefficient matrix $M=X_1\Theta X_2$ of the multivariate linear regression $Y_1\approx MY_2$: $X_1$ co-clusters the response variables, $X_2$ the covariate variables, and $\Theta$ is an estimated, testable matrix of block correspondences. Because the bases are fitted by regressing $Y_1$ on $Y_2$, the co-clustering is supervised---in contrast to clustering a single matrix---and the estimator is a tri-factorized non-negative reduced-rank regression (NMF-RRR). The construction unifies two traditions that have stood apart: the co-clustering of tri-NMF, which acts on a single matrix and neither uses covariates nor predicts, and the predictive low-rank regression of RRR, CCA, and PLS, which relate two variable blocks but return signed factors rather than clusters.

Three points locate the method. First, dropping non-negativity recovers reduced-rank regression exactly; NMF-RRR and RRR belong to the same low-rank class (Proposition~\ref{prop:rrr}), with RRR its unconstrained optimum and NMF-RRR its non-negative member, and empirically their leading fitted directions nearly coincide while the secondary directions and the basis differ---non-negative parts versus signed singular directions, as NMF relates to PCA. Second, although tri-NMF, with all three factors free, is the most general factorization, fixing the covariate block $Y_2$ as a known covariate is what broadens applicability: it admits external covariates and connects the model to the growth curve model and its relatives. Third, we developed inference for $\Theta$, conditional on the two estimated bases and reported with sample-wise sandwich standard errors---which, as Table~\ref{tab:sim4} shows, substantially limit size inflation when responses are correlated, where the model-based standard error over-rejects by a factor of five---and which distinguishes a cross-structure (one response group linked to several covariate groups, exposed by taking $Q<R$, and read conditionally on the selected factorization since the covariate-side split is then not uniquely identified) from the permutation structure a square $\Theta$ tends to impose, and supports interpretation when the unsupervised baselines disagree.

The three archetypal analyses delineated when the method is needed. On the Doubs data a single dominant gradient aligned both blocks, so unsupervised tri-NMF and NMF-RRR returned the same upstream--downstream guilds; there the contribution of NMF-RRR is the explicit, testable path structure. On the nutrimouse ($p>n$) and, most clearly, the FRANZOSA microbiome--metabolome data (originally $p\gg n$, screened to $30$ variables per block), where within-block and cross-block structure differ, the methods diverged and the supervised non-negative formulation recovered cross-structure---several covariate groups jointly associated with one response module---with the sandwich-based significance of the cross-paths providing conditional support. On FRANZOSA the unsupervised baseline reproduces much of the \emph{response} grouping and the two part company mainly on the covariate side, which the supervised fit defines through the regression rather than through the raw association; on nutrimouse agreement stays low on both sides. The simulation of Table~\ref{tab:sim5} locates when this matters: with the covariate blocks uncorrelated the unsupervised baseline recovers the true co-clustering just as well, so supervising the factorization gains nothing; once they are cross-correlated, so that the association $Y_1Y_2^{\top}$ and the coefficient $M$ carry different block structures, the baseline degrades (mean adjusted Rand index $1.00\to0.82$--$0.86$) while factorizing the coefficient does not---a small but entirely one-sided gap. A fourth example, the Wine data, illustrates the classification special case: with a one-hot class label the response basis becomes the identity and NMF-RRR reduces to a tested, parts-based co-clustering of the chemical features against the cultivars, recovering a clean permutation in which each cultivar carries a distinct chemical signature. Reduced-rank regression remained the better in-sample fit throughout, though at $p>n$ (nutrimouse) that edge is largely a property of the unconstrained fit rather than evidence of better structure---where CCA is outright ill-posed, the non-negative, normalized, low-rank parameterization keeps NMF-RRR well-behaved, just as ordinary NMF is fitted to wide matrices; the value of NMF-RRR is the non-negative, parts-based co-clustering with tested paths, not raw accuracy. The non-negativity constraints make it hard to improve the fit, but the covariate (predictive) structure is still useful for \emph{model selection}: because the fitted model predicts held-out entries and units, the two ranks can be chosen by cross-validation (Section~\ref{sec:rank}), guarding the co-clustering against over-fitting---an option unavailable to a pure single-matrix co-clustering.

Several limitations point to future work. As the simulation of Section~\ref{sec:sim} showed, when the bases are re-estimated---and, conditional on the true ranks being recovered, also when the two ranks are re-selected in every replication---the existence test for a path remained conservative in our simulations, whereas the magnitudes of the non-zero paths became anti-conservative. Proposition~\ref{prop:budget} supplies an algebraic constraint consistent with this pattern: where $M$ is identified and its grand total stable, the normalization ties $\sum_{q,r}\theta_{qr}$ to that total, so the paths share a fixed budget and an error in one is offset by the others. It does not, however, predict which paths gain and which lose, and it is silent in the $P_2>N$ regime where only $MY_2$ is identified; deriving the direction of the compensation, and correcting the path-magnitude intervals for it---for example by selective-inference adjustments---remain open. So does an unconditional treatment of rank selection: Table~\ref{tab:sim3} conditions on recovering the true ranks, because for $Q<R$ the per-path null is not decomposition-invariant. Both blocks were treated under a Gaussian (squared-error) loss after a per-variable min--max transform. This transform is a modelling choice with side effects. It sends each variable's minimum to $0$, so a structural zero (a non-detect) and the smallest observed value are mapped together. For compositional or highly sparse data---abundances that live on a simplex and metabolite features with many zeros---it also distorts the relative geometry, and it can only encode a negative covariate--response relation indirectly, through membership in a \emph{different} non-negative group rather than through a negative coefficient. We used it because it makes sign-free covariates non-negative with minimal assumptions and, after the $\log$ step, leaves variables bounded and roughly symmetric so that the Gaussian loss is adequate (Section~\ref{sec5}); count or compositional responses would nonetheless be better served by the (generalized) Kullback--Leibler objective $D(Y_1\,\|\,\hat Y_1)=\sum_{ij}\bigl(Y_{1,ij}\log\tfrac{Y_{1,ij}}{\hat Y_{1,ij}}-Y_{1,ij}+\hat Y_{1,ij}\bigr)$ on the raw data, for which the multiplicative updates extend directly, or by a compositional (log-ratio) treatment of the covariates. Re-fitting the four analyses under this loss left the significant structure intact---the Doubs paths, the FRANZOSA cross-structure and the Wine permutation all reproduced, only the already-marginal nutrimouse cross path losing significance (cf.\ the fragility of weak paths in Section~\ref{sec:sim}). This objective corresponds to a Poisson working model; a full Poisson (count) treatment is left to future work, noting that at large response means the Poisson and Gaussian models nearly coincide---consistent with the close agreement of the two losses here. Finally, kernelizing the covariate block would lift the linear covariate--response map to a nonlinear one while retaining the non-negative co-clustering of the responses.

\section*{Statements and Declarations}

\noindent\textbf{Funding.} This work was partly supported by the Japan Society for the Promotion of Science (JSPS) KAKENHI Grant Numbers 22K11930, 25K15229, 24K03007, 25H00482, and a research grant from the Fuji Seal Foundation.

\noindent\textbf{Competing Interests.} The authors state that there is no conflict of interest.

\noindent\textbf{Ethical Approval.} This study used only publicly available, de-identified data and involved no new recruitment of participants; no additional ethical approval or informed consent was therefore required.

\noindent\textbf{Author Contributions.} K.S. conceived and developed the method, implemented the software (the \texttt{nmfkc} R package), carried out the analyses, and wrote the manuscript. Y.T., from a glass-materials-engineering perspective centred on composition--property relationships, contributed to the motivation and framing of the two-block (covariate--response) co-clustering problem, advised on the interpretation of the model and its results, and reviewed and edited the manuscript. Both authors read and approved the final manuscript.

\noindent\textbf{Data Availability.} The data sets analysed in this study are publicly available: the FRANZOSA gut microbiome--metabolome data from the \texttt{borenstein-lab/microbiome-metabolome-curated-data} repository; the Doubs and nutrimouse data from the R package \texttt{ade4} and the canonical-correlation literature; and the Wine data from the UCI Machine Learning Repository \citep{wine1992}. Scripts that regenerate every table and figure in this paper, with their captured output, are provided as an online resource accompanying this article (Online Resource~1).

\noindent\textbf{Code Availability.} The R package \texttt{nmfkc} \citep{satoh2025nmfkc} implementing the methods described in this paper is publicly available on CRAN at \url{https://CRAN.R-project.org/package=nmfkc} (\url{https://doi.org/10.32614/CRAN.package.nmfkc}); a package vignette (\emph{Co-clustering two variable blocks with NMF-RRR}) reproduces the method on example data.

\noindent\textbf{Use of AI Tools.} The authors used Anthropic's Claude (Anthropic, Claude Opus~4 family models, \url{https://claude.ai/code}, accessed June--July 2026) as an assistant in the revision of R scripts and for English-language editing and restructuring of this manuscript. All methodological choices, the theoretical results, the numerical experiments, and the data analyses were designed, verified, and approved by the authors, who take full responsibility for the content of the article. No research data, results, figures, or tables were generated by artificial intelligence.

\begin{appendices}
\section{Identifiability of the tri-factorization}\label{app:ident}

We record conditions under which the factors of $M=X_1\Theta X_2$ are identifiable from $M$, at fixed ranks $Q,R$. Write $G:=\Theta X_2$, so $M=X_1G$. Throughout, $X_1\in\mathbb R_{\ge0}^{P_1\times Q}$ has unit column sums and $X_2\in\mathbb R_{\ge0}^{R\times P_2}$ unit row sums (Section~\ref{subsec:norm}), and identifiability is understood up to relabelling of the $Q$ response and $R$ covariate groups. We compare factorizations within the class obeying these constraints together with the separability condition below, which is standard in the identifiability theory of non-negative factorization \citep{donoho2003,arora2012}; without it the comparison admits the usual rotational non-uniqueness of NMF.

\paragraph{Separability (anchor variables).}
$X_1$ is \emph{response-separable} if for each response group $q$ there are a variable $i(q)$ and $c_q>0$ with $X_1[i(q),:]=c_q\,e_q^{\top}$ (an anchor loading on $q$ alone); $X_2$ is \emph{covariate-separable} if for each covariate group $r$ there are a variable $j(r)$ and $d_r>0$ with $X_2[:,j(r)]=d_r\,e_r$.

\begin{proposition}\label{prop:ident}
Let $M=X_1\Theta X_2$ with the normalization above.
\emph{(a)} If $\mathrm{rank}(M)=Q$, $X_1$ is response-separable, and $\Theta$ has no zero row, then $X_1$ and the product $G=\Theta X_2$ are determined by $M$ up to a permutation of the $Q$ response labels.
\emph{(b)} If in addition $R=Q$, $X_2$ is covariate-separable, and $\Theta$ is nonsingular, then $\Theta$ and $X_2$ are determined up to independent permutations of the response and covariate labels.
\emph{(c)} Under the assumptions of \emph{(a)}, if $Q<R$, then $\mathrm{rank}(G)\le Q<R$ and the step in \emph{(b)} no longer applies: $X_1$ and $G$ are still identified as in \emph{(a)}, but the factorization of $G$ into $\Theta$ and $X_2$ need not be uniquely determined by $M$; separate identification requires further assumptions specific to overcomplete non-negative factorizations, such as suitable extreme-ray conditions on the columns of $\Theta$.
\end{proposition}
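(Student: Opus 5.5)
The plan is to treat each part as a \emph{separable} non-negative factorization and identify the factors through the extreme rays of a conic hull, as in \citet{donoho2003,arora2012}. The normalization of Section~\ref{subsec:norm} then fixes the remaining scales.

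For \emph{(a)}, write $M=X_1G$ with $G=\Theta X_2$.
\begin{itemize}
\item \emph{$G$ has full row rank.} Since $Q=\mathrm{rank}(M)\le\mathrm{rank}(G)\le Q$, the $Q$ rows $g_1,\dots,g_Q$ of $G$ are linearly independent. In particular each is non-zero, which is also visible directly: $g_q=\Theta[q,:]X_2$ with $\Theta[q,:]\ge0$ non-zero and $X_2\ge0$ having unit row sums.
\item \emph{Extreme rays.} Every row of $M$ is $X_1[i,:]G$, a non-negative combination of the $g_q$. The anchor rows are $M[i(q),:]=c_q g_q$. Hence the cone generated by the rows of $M$ equals $\mathrm{cone}(g_1,\dots,g_Q)$. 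Because the $g_q$ are linearly independent, this cone is simplicial and its extreme rays are exactly the rays $\mathbb R_{>0}g_q$.
\item \emph{Comparison with a competitor.} Let $(X_1',G')$ be any other factorization in the class, so also response-separable with rank $Q$. The same reasoning applied to it shows that its rows $g_q'$ span the same extreme rays. Therefore $g_q'=\lambda_q g_{\pi(q)}$ for some permutation $\pi$ and some $\lambda_q>0$.
\item \emph{Fixing $X_1$.} Linear independence of the $g_q$ makes the coefficients of each row of $M$ unique, so $X_1'=X_1P_\pi\Lambda^{-1}$. The unit column sums of both $X_1$ and $X_1'$ then force $\Lambda=I$.
\end{itemize}

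For \emph{(b)}, apply the same argument to columns, now with $G=\Theta X_2$ given by (a).
\begin{itemize}
\item Since $\Theta$ is nonsingular, its columns $\theta_1,\dots,\theta_Q$ are linearly independent and non-zero.
\item Every column of $G$ is a non-negative combination of these columns, because $X_2\ge0$. Covariate-separability gives $G[:,j(r)]=d_r\theta_r$.
\item So the column cone of $G$ is simplicial with extreme rays $\mathbb R_{>0}\theta_r$. Any competing $(\Theta',X_2')$ must then satisfy $\Theta'=\Theta P_\sigma D$ for a permutation $\sigma$ and positive diagonal $D$.
\item Uniqueness of the coefficients gives $X_2'=D^{-1}P_\sigma^{\top}X_2$. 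The unit row sums of $X_2$ and $X_2'$ force $D=I$.
\end{itemize}
Together with (a), this yields independent permutations of the two label sets, since $\pi$ relabels the rows of $\Theta$ and $\sigma$ its columns.

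For \emph{(c)}, the key fact is $\mathrm{rank}(G)\le\mathrm{rank}(\Theta)\le Q<R$. The $R$ columns of $\Theta$ therefore lie in a $Q$-dimensional space and are linearly dependent. Both steps of (b) then fail: the columns need not all be extreme rays of their cone, and even when they are, the coefficients are not unique.

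To show that non-uniqueness can actually occur, I would exhibit a small counterexample with $Q=1$ and $R=2$. Take $\Theta=(1,1)$ and two covariate profiles $x_1,x_2$. Then $G=x_1+x_2$, and $G$ can be re-split as $\Theta'X_2'$ with different profiles, for instance by moving mass between the profiles while keeping unit row sums and non-negativity. Concretely, use $P_2=3$ and choose anchors for $x_1,x_2$ so that a perturbed pair is still separable, which shows that separability alone does not rescue uniqueness.

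The main obstacle is the extreme-ray step: it must hold against \emph{every} competitor in the class, not only those close to the true factorization. I would handle it by proving the cone equality from both sides, using separability of the competitor itself. I would also check carefully that non-zero rows of $\Theta$ (in (a)) and nonsingularity (in (b)) exclude degenerate zero rays. Building a clean counterexample for (c) that respects all normalizations is the secondary difficulty.
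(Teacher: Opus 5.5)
Your treatment of (a) and (b) matches the paper's proof. The row cone of $M$ equals that of $G$: one inclusion comes from $X_1\ge0$, the other from the anchor rows $M[i(q),:]=c_qG[q,:]$. That cone is simplicial because $\mathrm{rank}(G)=Q$, so any competitor in the separable class must place its $Q$ rows on the same $Q$ extreme rays, and unit column sums then remove the diagonal scale. Your column-cone version of (b) is the paper's step of applying (a) to $G^{\top}=X_2^{\top}\Theta^{\top}$, with the unit row sums of $X_2$ forcing $D=I$.

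The gap is in (c). That part is an existence claim (the split of $G$ \emph{need not} be unique), so its proof is a counterexample, and you do not actually produce one. The mechanism you sketch also fails as stated: a small perturbation that moves mass between the two profiles at fixed $\Theta=(1,1)$ with $P_2=3$. Write $x_1=(a,0,1-a)$ and $x_2=(0,b,1-b)$, and keep $\Theta'=\Theta$. Keeping the anchors in columns $1$ and $2$ forces $x_1'[1]=a$ and $x_2'[2]=b$, and the unit row sums then force $x_1'=x_1$ and $x_2'=x_2$. A small transfer cannot relocate an anchor either, since it can only create zeros where entries are already zero. So at fixed $\Theta$ no small re-split preserves separability.

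There are two repairs. The first lets $\Theta$ change while the anchors stay put: $\Theta'=(t,2-t)$, $x_1'=(a/t,0,1-a/t)$, $x_2'=(0,b/(2-t),1-b/(2-t))$ reproduces $G=(a,b,2-a-b)$ for any admissible $t\neq1$. This is exactly the mechanism of the paper's example, $\Theta=(0.4,0.6)$ versus $\tilde\Theta=(0.5,0.5)$, both giving $\Theta X_2=(0.2,0.2,0.6)$. The second keeps $\Theta=(1,1)$ but makes a large move that relocates the anchors, e.g.\ $x_1=(0.5,0,0.5)$, $x_2=(0,0.5,0.5)$ versus $x_1'=(0,0,1)$, $x_2'=(0.5,0.5,0)$. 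In either case take $P_1=1$ and $X_1=(1)$, so that $M=G$ and the hypotheses of (a) hold. Then check explicitly that the two factorizations are not related by a relabelling of the covariate groups.
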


\noindent\emph{Proof.}
\emph{(a)} Let $\mathrm{cone}(\cdot)$ denote the cone generated by the rows of a matrix. Since $M=X_1G$ with $X_1\ge0$, each row of $M$ is a non-negative combination of rows of $G$, so $\mathrm{cone}(M)\subseteq\mathrm{cone}(G)$. Response-separability gives $M[i(q),:]=c_q\,G[q,:]$ with $c_q>0$, so every row of $G$ lies on a ray of $\mathrm{cone}(M)$ and $\mathrm{cone}(G)\subseteq\mathrm{cone}(M)$; the two cones therefore coincide. From $\mathrm{rank}(M)=Q$ and $M=X_1G$ we get $\mathrm{rank}(G)=Q$, so the rows of $G$ are linearly independent and $\mathrm{cone}(G)$ is simplicial, its $Q$ extreme rays being precisely the rays of its rows. For any admissible factorization $M=\tilde X_1\tilde G$ the same argument gives $\mathrm{cone}(\tilde G)=\mathrm{cone}(M)$; since $\tilde G$ consists of $Q$ linearly independent generators of this same $Q$-dimensional simplicial cone, its rows must generate its $Q$ extreme rays, and as those extreme rays are unique, $\tilde G=D\,P^{\top}G$ for a permutation matrix $P$ and a positive diagonal $D$. Because $G$ has full row rank, $M=X_1G=\tilde X_1\tilde G$ forces $\tilde X_1=X_1PD^{-1}$; equating column sums and using that $X_1,\tilde X_1$ have unit columns gives $D=I_Q$, whence $\tilde X_1=X_1P$ and $\tilde G=P^{\top}G$.

\noindent\emph{(b)} Fixing the response labelling identifies $G$, and $G^{\top}=X_2^{\top}\Theta^{\top}$. With $R=Q$ and $\Theta$ nonsingular, $\mathrm{rank}(G^{\top})=Q$; covariate-separability of $X_2$ is response-separability of $X_2^{\top}$, and unit row sums of $X_2$ are unit column sums of $X_2^{\top}$. Applying (a) to $G^{\top}$ yields $\tilde X_2=S^{\top}X_2$ and $\tilde\Theta=\Theta S$ for a permutation $S$. Together with the response permutation $P$, the admissible factorizations are exactly $\tilde X_1=X_1P$, $\tilde\Theta=P^{\top}\Theta S$, $\tilde X_2=S^{\top}X_2$.

\noindent\emph{(c)} The rank drop $\mathrm{rank}(G)=\mathrm{rank}(\Theta X_2)\le Q<R$ removes the full-rank hypothesis that (a) needs for $G^{\top}=X_2^{\top}\Theta^{\top}$; more than that, separability and normalization alone do \emph{not} pin down the split, as the following example shows. Take $Q=1$, $R=2$, $P_2=3$, with $P_1=1$ and $X_1=(1)$, so that $M=G$ and the hypotheses of (a) hold trivially ($\mathrm{rank}(M)=Q=1$, and the lone response variable anchors its group); set
\[
\Theta=(0.4,\ 0.6),\quad X_2=\begin{pmatrix}0.5&0&0.5\\[2pt]0&\tfrac13&\tfrac23\end{pmatrix};\qquad
\tilde\Theta=(0.5,\ 0.5),\quad \tilde X_2=\begin{pmatrix}0.4&0&0.6\\[2pt]0&0.4&0.6\end{pmatrix}.
\]
Both $X_2$ and $\tilde X_2$ are non-negative with unit row sums and covariate-separable (columns $1$ and $2$ anchor the two groups), yet $\Theta X_2=\tilde\Theta\tilde X_2=(0.2,\ 0.2,\ 0.6)$ while the two factorizations are not related by a permutation. Hence for $Q<R$ the separate identifiability of $\Theta$ and $X_2$ is not guaranteed by separability and normalization; it requires further assumptions specific to overcomplete non-negative factorizations (such as suitable extreme-ray conditions on the columns of $\Theta$), and does not follow merely from the inapplicability of the argument in (a). \hfill$\square$

\medskip
\noindent The separability hypothesis---each group owns a variable loading on it alone---is strong and may fail in practice; for full-rank factorizations, uniqueness without it needs the weaker but more intricate minimum-volume or sufficiently-scattered conditions \citep{laurberg2008,fu2019} (the overcomplete case (c) is different, as noted above). Proposition~\ref{prop:ident} makes precise the hierarchy used in the main text: the fitted mean $MY_2$ is identified from the regression model, and the coefficient $M$ additionally when $Y_2$ has full row rank (or, under random design, when the covariate covariance is positive definite); the response profiles $X_1$ and the covariate signatures $\Theta X_2$ are identified under (a); the individual covariate groups and $\Theta$ only under the balanced condition (b); and for $Q<R$ the covariate split is conditional on the selected factorization, in line with the conditional inference of Section~\ref{sec:inference}.
\end{appendices}

\bibliography{nrrr}

\end{document}